\newtheorem{mydef}{Definition}
\newtheorem{prb}{Problem}
\newtheorem{thm}{Theorem}
\newtheorem{lem}{Lemma}
\newtheorem{conj}{Conjecture}
\tikzset{cross/.style={cross out, draw=black, minimum size=2*(#1-\pgflinewidth), inner sep=0pt, outer sep=0pt},
cross/.default={1pt}}
\def\BibTeX{{\rm B\kern-.05em{\sc i\kern-.025em b}\kern-.08em
    T\kern-.1667em\lower.7ex\hbox{E}\kern-.125emX}}
\title{\LARGE \bf
Identifying Single-Input Linear System Dynamics from Reachable Sets
}
\author{{Taha Shafa}
\and{Roy Dong}
\and{Melkior Ornik}
\thanks{TS and MO are with the Department of Aerospace Engineering at the University of Illinois Urbana-Champaign. RD is with the Department of Electrical and Computer Engineering at the University of Illinois Urbana-Champaign.
        {\tt\small \{tahaas2, roydong, mornik\}@illinois.edu}}
}
\begin{document}

\maketitle
\thispagestyle{empty}
\pagestyle{empty}

\begin{abstract}

This paper is concerned with identifying linear system dynamics without the knowledge of individual system trajectories, but from the knowledge of the system's reachable sets observed at different times. Motivated by a scenario where the reachable sets are known from partially transparent manufacturer specifications or observations of the collective behavior of adversarial agents, we aim to utilize such sets to determine the unknown system's dynamics. This paper has two contributions. Firstly, we show that the sequence of the system's reachable sets can be used to uniquely determine the system's dynamics for asymmetric input sets under some generic assumptions, regardless of the system's dimensions. We also prove the same property holds up to a sign change for two-dimensional systems where the input set is symmetric around zero. Secondly, we present an algorithm to determine these dynamics. We apply and verify the developed theory and algorithms on an unknown band-pass filter circuit solely provided the unknown system's reachable sets over a finite observation period.

\end{abstract}

\section{Introduction}
This paper aims to determine whether it is possible to use a control system's reachable sets obtained at different time instances to calculate the system's dynamics. In certain instances, we may be able to determine an approximation of a system's reachable sets over a finite observation period. The purpose of this paper is to show that such information can be utilized to arrive at a dynamic model for an unknown system. Practical applications may include system identification of high-density drone and missile swarms \cite{asaamoning2021drone, taub2013intercept} where the reachable set can be found by observing multiple agents collectively, but without the capability of distinguishing them. Other applications include predicting macro-level population behaviors, e.g., determining how crowd behavior changes under social or economic events like the introduction of a new population or changes in the stock market \cite{musse1997model}. We may also be able to model internal body functions on the cellular level \cite{schiebinger2017reconstruction, zhang2021optimal}, namely understanding how cells change their identity and behavior in living systems.

We must first show that model identification using reachable sets will uniquely determine an unknown system's true dynamics. After uniqueness is proven, we develop a method to identify a linear model of an unknown system's behavior using its reachable sets. Previous research in system identification presents the most closely related contributions to the method presented in this paper. However, previous work on system identification classically relies on frequency response techniques induced by randomized actuator inputs \cite{ljung1998system, pintelon2012system}. More sophisticated system identification techniques involve neural networks \cite{miller1995neural}. Single-layer and multi-layer neural networks have also been applied with the use of parameter estimation algorithms using a single hidden layer \cite{chen1990non} and $H_\infty$ control-induced excitations for robust identification of system nonlinearities \cite{480609}. More recent work involves using recurrent neural networks \cite{medsker2001recurrent, zaremba2014recurrent} with Long Short-Term Memory Units (LSTM) and fractional order neural networks (FONN) \cite{aguilar2020fractional, zuniga2021fractal} to identify and control dynamic systems. These methods, however, cannot be used unless one has access to a system's actuators or individual trajectories. The significant difference of our novel method is that it does not require control of any actuators to model an unknown system nor observations of individual trajectories.

On a high level, the problem in this paper involves identifying the behaviors or capabilities of an observed system under limited information. While there exist other methods for adversarial behavior recognition, those works are focused on determining adversarial agent goals by matching actions of an agent against a plan library \cite{ledezma2004predicting,lisy2012game,le2015generative}.  More recent work \cite{iglesias2010evolving,zhang2021anticoncealer} proposes using evolving fuzzy systems and artificial intelligence to adaptively predict agent behavior. In contrast, our method is starkly different since it is not primarily concerned with predicting adversarial behavior, but determining all possible actions of an adversary within a time horizon. Thus, instead of using a library of finite predetermined adversarial actions, our method uses reachable sets to produce a dynamic model of an unknown system.

The outline of this paper is as follows: in Section II, we discuss the problem statement, namely posing the question of whether linear dynamics can be uniquely recovered given an unknown system's sequence of reachable sets and how to recover said dynamics. In Section III, we address the question of whether the system dynamics are uniquely determined by the system's reachable sets. We show that under generic assumptions, the system dynamics are indeed unique under asymmetric input sets. For unknown systems with input sets symmetric around zero, uniqueness modulo a sign has been proved in the two-dimensional case; we conjecture the same holds for higher dimensions. In Section IV, we propose a procedure using knowledge of the reachable sets to calculate the system dynamics. In Section V, we illustrate by example how to implement this procedure to identify the models of an unknown band-pass filter circuit and an additional dynamical system with a symmetric input set.

\subsection{Notation}
We denote the set of all $n\times m$ real and complex matrices by $\mathbb{R}^{n\times m}$ and $\mathbb{C}^{n\times m}$ respectively; for $M \in \mathbb{R}^{n\times m}$, we let $M^T \in \mathbb{R}^{m\times n}$ denote its transpose. Vectors $e_1,\hdots,e_n$ will denote the canonical basis vectors in $\mathbb{R}^n$. We let $\mathbb{N}$ denote the set of all natural numbers, $\mathbb{Z}_{\geq 0}$ denote the set of non-negative integers, and $GL(n)$ denote the set of invertible square matrices of dimension $n\in\mathbb{N}$. Let $\mathcal{S}$ be a set of points in $\mathbb{R}^n$. Then $\mathrm{Conv}(\mathcal{S})$ denotes the convex hull of $\mathcal{S}$. Notation $B\mathcal{X}$ where $B \in \mathbb{R}^{n\times m}$ and $\mathcal{X} \subset \mathbb{R}^m$ denotes the set $B\mathcal{X} = \{Bx~|~x\in \mathcal{X}\}$. Given two sets $\mathcal{A},\,\mathcal{B}\in\mathbb{R}^n$, we denote $\mathcal{A} \oplus \mathcal{B} = \{a + b~|~a\in\mathcal{A},\,b\in\mathcal{B}\}$ as their Minkowski sum. Similarly, $\mathcal{A} \ominus \mathcal{B} = \{c \in \mathbb{R}^n~|~c\oplus\mathcal{B}\subseteq\mathcal{A}\}$ denotes the Minkowski difference. We also define $\mathcal{A} + b = \{a + b~|~a\in\mathbb{R}^n\}$ as the translation of $\mathcal{A}$ by $b \in \mathbb{R}^n$.
\section{Problem Statement}
We consider the discrete-time, single-input linear system 

\vskip -5pt

\begin{equation}\label{discreteLinearSystem}
    x[i+1] = Ax[i] + bu[i], \quad x[0] = 0,
\end{equation}

\noindent where all $i \in \mathbb{Z}_{\geq 0}$, $x \in \mathbb{R}^n$, $A \in \mathbb{R}^{n\times n}$, $b\in \mathbb{R}^n$ and $u~\in~\mathcal{U}~\subset~\mathbb{R}$ where $\mathcal{U} = [\underline{u}, \overline{u}]$ such that $\underline{u} \neq \overline{u}$. We assume $b\neq 0$ since the system's reachable sets are trivial otherwise. We also assume $x[0] = 0$; by a shift in coordinates, the case of $x[0] \neq 0$ is equivalent to that of an \textit{affine} system $x[i+1]=Ax[i]+bu[i]+c$ with initial state at the origin. Solving the problem in this setting can likely be approached by reproducing similar calculations in subsequent sections, but we leave such an effort for future work.

Our goal is to establish whether the dynamics of \eqref{discreteLinearSystem}, i.e., matrices $A$ and $b$, can be determined using the system's reachable sets. We now formally define said reachable sets. 

\begin{mydef}\label{reachableSet_Definition}
For $i \in \mathbb{Z}_{\geq 0}$, the (forward) reachable set of system \eqref{discreteLinearSystem} at time $i$ is $$\mathcal{R}(i, x[0]) = \{\phi_u(i; x[0]) ~ | ~ u : \mathbb{Z}_{\geq 0} \to \mathcal{U}\},$$ where $\phi_u(\cdot; x[0])$ denotes the controlled trajectory of system \eqref{discreteLinearSystem} with control signal $u$.
\end{mydef}

We present the problem of whether the system dynamics are uniquely determined by the system's reachable sets.

\begin{prb}\label{problemStatement}
Given a sequence of sets $\{\mathcal{R}(i,0)\}_{i\in\mathbb{N}}$ which is generated by \eqref{discreteLinearSystem} for some $(A,b)$, determine whether $(A,b)$ can be uniquely recovered from $\{\mathcal{R}(i,0)\}_{i\in\mathbb{N}}$.
\end{prb}

Notice that we explicitly assume the knowledge of all reachable sets at all times. Such an assumption might not always be realistic. We will show that we often need only the first $n + 1$ reachable sets to uniquely recover the dynamics. We leave the more general case -- where only reachable sets at different time steps are available -- for future work.

The first step to solving Problem \ref{problemStatement} is to derive a simple relationship between the system matrices and $\mathcal{R}(i,0)$. Given system \eqref{discreteLinearSystem}, we naturally utilize Minkowski sums and the Minkowski difference \cite{althoff2015computing} to produce such a relationship for all $i \in \mathbb{N}$. 

\begin{thm}\label{discreteMinkowskiDifference}
Let $\mathcal{R}(i,0)$ be the reachable set at time $i$ of \eqref{discreteLinearSystem}. Then 

\begin{equation}\label{theoremABU}
    A^{i-1}b\mathcal{U} = \mathcal{R}(i,0) \ominus \mathcal{R}(i-1,0).
\end{equation}
\end{thm}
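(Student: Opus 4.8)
The plan is to first obtain a closed form for the reachable sets and then reduce the claim to a cancellation property of the Minkowski operations. Unrolling the recursion in \eqref{discreteLinearSystem} from $x[0]=0$ gives $x[i]=\sum_{k=0}^{i-1}A^{i-1-k}b\,u[k]$, and since each input $u[k]$ ranges independently over $\mathcal{U}$, the reachable set is the Minkowski sum $\mathcal{R}(i,0)=\bigoplus_{j=0}^{i-1}A^{j}b\mathcal{U}$, which I would justify by induction on $i$ using that the Minkowski sum captures exactly the independent choices of the $u[k]$. Peeling off the top index yields the recursion $\mathcal{R}(i,0)=A^{i-1}b\mathcal{U}\oplus\mathcal{R}(i-1,0)$, so writing $\mathcal{C}=A^{i-1}b\mathcal{U}$ and $\mathcal{B}=\mathcal{R}(i-1,0)$, the theorem \eqref{theoremABU} becomes the identity $(\mathcal{C}\oplus\mathcal{B})\ominus\mathcal{B}=\mathcal{C}$.

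Next I would record the structural facts that make this identity hold. Since $\mathcal{U}=[\underline{u},\overline{u}]$ is a compact interval, $\mathcal{C}$ is a (possibly degenerate) line segment, and $\mathcal{B}$ is a finite Minkowski sum of such segments, i.e.\ a zonotope. Both are therefore nonempty, compact, and convex. Convexity of $\mathcal{C}$ is the essential hypothesis, while compactness of $\mathcal{B}$ guarantees that its support function is finite everywhere, which is what legitimizes the cancellation step below.

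The identity splits into two inclusions. The inclusion $\mathcal{C}\subseteq(\mathcal{C}\oplus\mathcal{B})\ominus\mathcal{B}$ is elementary and needs no convexity: for any $c\in\mathcal{C}$ and $b\in\mathcal{B}$ we have $c+b\in\mathcal{C}\oplus\mathcal{B}$, so $c\oplus\mathcal{B}\subseteq\mathcal{C}\oplus\mathcal{B}$ and hence $c$ lies in the Minkowski difference. The reverse inclusion is where the work lies, and I would prove it with support functions $h_{\mathcal{S}}(\xi)=\sup_{s\in\mathcal{S}}\langle\xi,s\rangle$. If $x\in(\mathcal{C}\oplus\mathcal{B})\ominus\mathcal{B}$ then $x\oplus\mathcal{B}\subseteq\mathcal{C}\oplus\mathcal{B}$; taking support functions of both sides and using $h_{\mathcal{C}\oplus\mathcal{B}}=h_{\mathcal{C}}+h_{\mathcal{B}}$ gives $\langle\xi,x\rangle+h_{\mathcal{B}}(\xi)\le h_{\mathcal{C}}(\xi)+h_{\mathcal{B}}(\xi)$ for every $\xi$. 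Cancelling the finite term $h_{\mathcal{B}}(\xi)$ leaves $\langle\xi,x\rangle\le h_{\mathcal{C}}(\xi)$ for all $\xi$, which characterizes membership $x\in\mathcal{C}$ precisely because $\mathcal{C}$ is closed and convex.

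The main obstacle is exactly this reverse inclusion: the cancellation $(\mathcal{C}\oplus\mathcal{B})\ominus\mathcal{B}=\mathcal{C}$ fails for nonconvex $\mathcal{C}$, so the argument must lean on the convex (zonotopal) structure of the reachable sets, and I must confirm that finiteness of $h_{\mathcal{B}}$ (boundedness of $\mathcal{B}$) is what permits the cancellation. I would also check the base and degenerate cases, namely $i=1$, where $\mathcal{R}(0,0)=\{0\}$ and the claim is immediate, and $A^{i-1}b=0$, where $\mathcal{C}=\{0\}$ and boundedness of $\mathcal{B}$ forces $\mathcal{B}\ominus\mathcal{B}=\{0\}$, so that \eqref{theoremABU} holds uniformly for all $i\in\mathbb{N}$.
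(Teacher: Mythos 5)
Your proposal is correct, and its skeleton coincides with the paper's: both unroll the recursion from $x[0]=0$ to obtain $\mathcal{R}(i,0)=A^{i-1}b\mathcal{U}\oplus\mathcal{R}(i-1,0)$, and then reduce the theorem to the cancellation identity $(\mathcal{C}\oplus\mathcal{B})\ominus\mathcal{B}=\mathcal{C}$ with $\mathcal{C}=A^{i-1}b\mathcal{U}$ and $\mathcal{B}=\mathcal{R}(i-1,0)$. The difference is in how that identity is handled. The paper establishes by induction that every $\mathcal{R}(i,0)$ is convex (Minkowski sums of convex sets are convex) and then invokes a cited result on Minkowski differences of convex sets to conclude; the cancellation itself is never actually proven there. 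You instead prove it directly with support functions, and in doing so you isolate the hypotheses that genuinely matter: closedness and convexity of the summand $\mathcal{C}$ being recovered (so that $\langle\xi,x\rangle\le h_{\mathcal{C}}(\xi)$ for all $\xi$ implies $x\in\mathcal{C}$), plus nonemptiness and boundedness of $\mathcal{B}$ so that $h_{\mathcal{B}}$ is finite and may be cancelled from both sides. Notably, your argument never needs convexity of $\mathcal{B}$, which the paper goes out of its way to establish by induction; so your route is self-contained and slightly more general, while the paper's is shorter at the price of outsourcing the key step to a reference. Your treatment of the degenerate cases ($i=1$ with $\mathcal{R}(0,0)=\{0\}$, and $A^{i-1}b=0$) is also sound, though both are in fact subsumed by the general support-function argument, since $\{0\}$ is itself closed and convex.
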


\begin{proof}
By \eqref{discreteLinearSystem} it is clear that $\mathcal{R}(1,0) = b\mathcal{U}$. Since $$x[i] = A^ix[0] + A^{i-1}bu[0] + \hdots + bu[i-1],$$ clearly $$\mathcal{R}(i,0) = A^{i-1}b\mathcal{U} \oplus \hdots \oplus b\mathcal{U}$$ and hence $$\mathcal{R}(i,0) = A^{i-1}b\mathcal{U} \oplus \mathcal{R}(i-1,0).$$

We recall that the Minkowski sum of two convex sets is also convex \cite{fradelizi2022volume}. Since all sets $A^{i-1}b\mathcal{U}$ are convex by the definition of $\mathcal{U}$, all sets $\mathcal{R}(i,0)$ are convex by induction. Hence, the appropriate Minkowski difference \cite{yan2015closed} can be calculated to arrive at \eqref{theoremABU}. 
\end{proof}

Theorem \ref{discreteMinkowskiDifference} implies that we can obtain $\{A^{i-1}b\mathcal{U}\}_{i\in\mathbb{N}}$ using the reachable sets $\mathcal{R}(i,0)$. We will prove that when $\mathcal{U} \neq [-c,c]$ with $c \in \mathbb{R}$, matrices $A$ and $b$ are indeed \textit{generically} uniquely defined from $\{A^{i-1}b\mathcal{U}\}_{i\in\mathbb{N}}$, that is, uniquely defined under the assumptions formally written in Theorem \ref{Theorem_generalUniqueMatrix} shown to be generic in a topological sense in Lemma \ref{matrixProperties_GeneralCase}. When $\mathcal{U} = [-c,c]$ for some $c \in \mathbb{R}$, we can show that $(A,b)$ are not uniquely defined, but conjecture that they are unique up to a change in sign. We prove that this property holds for $n = 2$. We shall refer to solutions for cases with such a set $\mathcal{U}$ as \textit{$\pm$-unique}, which is explicitly defined in the next section. 

Following Problem \ref{problemStatement}, which seeks to determine whether system dynamics are uniquely defined from reachable sets, we present the second problem, which aims to explicitly determine such dynamics.

\begin{prb}\label{ProblemStatement2}
Develop a method to recover at least one pair $(A,b)$ which generates $\{\mathcal{R}(i,0)\}_{i\in\mathbb{N}}$.
\end{prb}

Based on methods in \cite{althoff2015computing} for calculating Minkowski differences, we can calculate $\{A^{i-1}b\mathcal{U}\}_{i \in \mathbb{N}}$. We show in Section IV that the results of these Minkowski differences and knowledge of $\mathcal{U}$ are sufficient for calculating $\{A^{i-1}b\}_{i \in \mathbb{N}}$, which in turn can be utilized to calculate the matrix pair $(A,b)$ for controllable systems. We first tackle Problem \ref{problemStatement}.
\section{Uniqueness of the Derived System Model}

We wish to determine when any pair $(A,b)$ uniquely defines the dynamics of \eqref{discreteLinearSystem}. It can be easily shown that the answer is generally negative. Consider an unknown system \eqref{discreteLinearSystem} where

\vskip -5pt

\begin{equation}\label{example}
    A = \begin{bmatrix}0 & 0\\ 0 & 1\end{bmatrix},\quad b = \begin{bmatrix}0 \\ 1\end{bmatrix}
\end{equation}

\noindent and $\mathcal{U} = [0,1]$. By equation \eqref{theoremABU} of Theorem \ref{discreteMinkowskiDifference}, we see that if $A' = I$, then the reachable sets of \eqref{discreteLinearSystem} with matrix pairs $(A,b)$ and $(A',b)$ are equivalent. Thus, we begin by determining sufficient conditions which guarantee whether $(A,b)$ can be uniquely recovered as stated in Problem \ref{problemStatement}. We will show uniqueness under several technical assumptions; Lemma \ref{matrixProperties_GeneralCase} shows said assumptions are generic in a topological sense.

\begin{lem}\label{matrixProperties_GeneralCase}
Let $\mathcal{N} \subset \mathbb{R}^{n\times n}$ be the set of all matrices such that if $A \in \mathcal{N}$, then $A^2$ has distinct eigenvalues. Let $b\in\mathbb{R}^n\backslash\{0\}$ and $\mathcal{O} \in \mathbb{R}^{n\times n}$ be the set of all matrices such that, if $A \in \mathcal{O}$ and $\eta \in \mathbb{C}^n$ is any left eigenvector of $A$, $b^T\eta \neq 0$. Then, $GL(n)\cap\mathcal{N}\cap\mathcal{O}$ is an open and dense set. 
\end{lem}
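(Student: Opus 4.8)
The plan is to prove that $GL(n) \cap \mathcal{N} \cap \mathcal{O}$ is open and dense by showing each of the three sets is open and dense, and then invoking the fact that a finite intersection of open dense sets is open and dense (which holds since $\mathbb{R}^{n \times n}$ is a Baire space, or more elementarily because finite intersections of open sets are open and finite intersections of dense open sets are dense).

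Let me think about each set.

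$GL(n)$ is the set of invertible matrices. This is the complement of the zero set of the determinant polynomial. The determinant is a polynomial in the entries of $A$, so its zero set is a proper algebraic variety (proper because, e.g., the identity has nonzero determinant). The complement of a proper algebraic variety is open (determinant is continuous) and dense (a nonzero polynomial cannot vanish on an open set). So $GL(n)$ is open and dense.

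$\mathcal{N}$ is the set of matrices $A$ such that $A^2$ has distinct eigenvalues. The eigenvalues of $A^2$ are distinct iff the discriminant of the characteristic polynomial of $A^2$ is nonzero. The discriminant of $\det(\lambda I - A^2)$ is a polynomial in the entries of $A^2$, which in turn are polynomials in the entries of $A$. So the discriminant is a polynomial $p(A)$ in the entries of $A$. The set where this polynomial is nonzero is open (polynomial is continuous) and is dense provided $p$ is not identically zero. To show $p$ is not identically zero, I just need to exhibit one matrix $A$ such that $A^2$ has distinct eigenvalues — e.g., a diagonal matrix with entries whose squares are all distinct. Then the complement of the variety $\{p = 0\}$ is open and dense.

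$\mathcal{O}$ is the trickiest. For a fixed $b \neq 0$, $\mathcal{O}$ consists of matrices $A$ for which $b^T \eta \neq 0$ for every left eigenvector $\eta$ of $A$. Equivalently, $A \notin \mathcal{O}$ when some left eigenvector is orthogonal to $b$. The hard part is that left eigenvectors do not depend polynomially on $A$, so I cannot directly write $\mathcal{O}$ as the complement of a polynomial variety in the same clean way. My plan here is to characterize $\mathcal{O}$ via a resultant/observability-type condition. The condition "$b^T \eta \neq 0$ for every left eigenvector $\eta$" is closely related to the pair $(A^T, b)$ being such that no eigenvector of $A^T$ lies in the orthogonal complement of $b$; by the Popov–Belevitch–Hautus (PBH) test, $b^T \eta \neq 0$ for every left eigenvector $\eta$ of $A$ holds generically and can be connected to controllability of $(A, b)$. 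More carefully, if $A$ has distinct eigenvalues (which I can arrange on the dense open set already cut out by $\mathcal{N}$ restricted suitably, or by a separate generic condition), then each eigenvalue has a one-dimensional left eigenspace, and the PBH condition $b^T \eta \neq 0$ for all left eigenvectors is exactly controllability of $(A,b)$. Controllability is the nonvanishing of the determinant of the controllability matrix $[b \ Ab \ \cdots \ A^{n-1}b]$, which is again a polynomial in the entries of $A$ (with $b$ fixed). So on the open dense set where $A$ has distinct eigenvalues, $\mathcal{O}$ agrees with the controllable pairs, cut out by a nonvanishing polynomial.

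To assemble the argument for $\mathcal{O}$: I would first restrict to the open dense set $\mathcal{D}$ of matrices with distinct eigenvalues (complement of the discriminant variety of $A$ itself, nonempty by exhibiting a diagonal matrix with distinct diagonal entries). On $\mathcal{D}$, $\mathcal{O} \cap \mathcal{D}$ equals the controllable pairs $(A,b)$, which is the complement of the variety $\{\det[b \ Ab \ \cdots \ A^{n-1}b] = 0\}$. Since $b \neq 0$ is fixed, this controllability determinant is a polynomial in the entries of $A$; it is not identically zero because one can exhibit a controllable pair (e.g. $A$ in companion form with the given $b$, or perturb). Hence $\mathcal{O} \cap \mathcal{D}$ is open and dense in $\mathcal{D}$, and therefore open and dense in $\mathbb{R}^{n \times n}$. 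The main obstacle, as noted, is handling the non-polynomial dependence of eigenvectors on $A$; the resolution is to push everything through the PBH controllability characterization, which replaces the eigenvector condition by a polynomial (determinant) condition at the cost of first excising the measure-zero set of matrices with repeated eigenvalues. Finally, I would note $\mathcal{N} \subseteq \mathcal{D}$ is not automatic, so I treat the distinct-eigenvalue condition on $A$ as an auxiliary generic set and conclude by intersecting the three established dense open sets $GL(n)$, $\mathcal{N}$, and $\mathcal{O}$, each open and dense, whose intersection is thus open and dense.
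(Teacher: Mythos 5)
Your route is genuinely different from the paper's. The paper proves openness by appealing to continuity of eigenvectors, and proves density by an explicit two-stage perturbation: first perturbing the left eigenvectors of a given $A$ so that they form a basis and are not orthogonal to $b$, then perturbing the eigenvalues to be nonzero and distinct, with $\delta$ and $\rho$ chosen to force $\|A - A''\| < \epsilon$. You instead exhibit each set as (essentially) the complement of the zero set of a nontrivial polynomial in the entries of $A$: the determinant for $GL(n)$, the discriminant of the characteristic polynomial of $A^2$ for $\mathcal{N}$, and, via the PBH test, the controllability determinant $\det[b \ Ab \ \cdots \ A^{n-1}b]$ for $\mathcal{O}$. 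This is shorter, avoids the paper's delicate norm estimates (and its somewhat informal bookkeeping needed to keep the perturbed eigenvectors and eigenvalues in complex-conjugate pairs so that $A''$ is real), and yields a stronger conclusion for free: the generic set contains the complement of a proper algebraic variety, hence is not only open and dense but has full Lebesgue measure. What the paper's constructive argument buys in exchange is an explicit recipe for pushing a degenerate matrix into the generic set, which connects to its later remark about perturbing reachable sets to recover approximate dynamics.

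There is one loose end in your assembly, and it sits exactly where you hedged. You conclude by ``intersecting the three established dense open sets $GL(n)$, $\mathcal{N}$, and $\mathcal{O}$,'' but your argument establishes openness only of $\mathcal{O} \cap \mathcal{D}$, where $\mathcal{D}$ denotes the matrices with distinct eigenvalues, not of $\mathcal{O}$ itself: a superset of an open dense set inherits density but not openness, so openness of $GL(n)\cap\mathcal{N}\cap\mathcal{O}$ does not yet follow from what you proved. Either of two one-line observations closes the gap. First, the PBH eigenvector test does not require distinct eigenvalues: for any square $A$, there exists a nonzero $\eta$ with $\eta^T A = \lambda \eta^T$ and $\eta^T b = 0$ if and only if $(A,b)$ is uncontrollable (one direction from $\eta^T A^k b = \lambda^k \eta^T b = 0$ for all $k$, the other from the Kalman decomposition), so $\mathcal{O}$ equals the set where $\det[b \ Ab \ \cdots \ A^{n-1}b] \neq 0$ globally, and your detour through $\mathcal{D}$ was unnecessary caution. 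Second, your worry that ``$\mathcal{N} \subseteq \mathcal{D}$ is not automatic'' is misplaced: it is automatic, since triangularizing $A$ over $\mathbb{C}$ shows the eigenvalues of $A^2$ are the squares of those of $A$ counted with multiplicity, so a repeated eigenvalue of $A$ forces a repeated eigenvalue of $A^2$. With that inclusion, $GL(n)\cap\mathcal{N}\cap\mathcal{O} = GL(n)\cap\mathcal{N}\cap(\mathcal{O}\cap\mathcal{D})$ is a finite intersection of open dense sets, and your conclusion stands.
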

\begin{proof}
It is a well known result that the set of all matrices with distinct eigenvalues and the set $GL(n)$ are both open and dense \cite{kelley2017general}. Clearly, openness of the former set implies $\mathcal{N}$ is open. To show $\mathcal{N}$ is also dense, we would follow similar steps as part of the proof to show $GL(n)\cap\mathcal{N}\cap\mathcal{O}$ is dense. For succinctness, we prove $GL(n)\cap\mathcal{N}\cap\mathcal{O}$ is open and dense and leave the proof that $\mathcal{N}$ is dense to the reader.

Openness of $GL(n)\cap\mathcal{N}\cap\mathcal{O}$ can be trivially concluded by the continuity of eigenvectors \cite{kato1976perturbation}, meaning if we consider a matrix $A(t)$ whose elements are a continuous function of $t$, any eigenvectors $v_i(t)$ and left eigenvectors $\eta_i(t)$ of norm $1$ of $A(t)$ are continuous function of $t$.

We now prove denseness. In other words, we will show that for any arbitrary matrix $A$ and any $\epsilon > 0$, there exists a matrix $A'' \in GL(n)\cap\mathcal{N} \cap \mathcal{O}$ such that $\|A~-~A''\|~<~\epsilon$. Let $\eta_i$ be the left eigenvectors of $A$ so that $\eta_i^TA~=~\lambda_i\eta_i^T$. By the denseness of $GL(n)$, for any $\delta > 0$ we can find vectors $\eta_i'^{ T}$ such that $\|\eta_i^T~-~\eta_i'^{T}\|~<~\delta$ for all $i$ and $\mathrm{det}([\eta_1'\textrm{ } \eta_2'\textrm{ } \cdots\textrm{ } \eta_n'])~\neq~0$. By the continuity of determinants and because $b\neq 0$, we can slightly perturb one element of $\eta_i^{\prime T}$ to obtain $\eta_i''^T$ such that $\mathrm{det}([\eta_1''\textrm{ } \eta_2''\textrm{ } \cdots\textrm{ } \eta_n''])~\neq~0,\,\|\eta_i^T - \eta_i''^T\| < \delta$ for all $i$, and $b^T\eta_i\neq 0$. We now let $\eta_i''$ form a basis in $\mathbb{C}^n$, and define a matrix $A^\prime$ such that $\eta_i''^TA^\prime = \lambda_i\eta_i''^T$ and $A^\prime \in \mathcal{O}$. If the perturbations above are performed in a way that ensure that perturbations of real eigenvectors remain real, and perturbations of complex conjugate vectors remain complex conjugates, matrix $A'$ is real \cite{strang2006linear}.

Since $\eta_i''$ form a basis in $\mathbb{C}^n$, we can represent any vector $x \in \mathbb{R}^n$ as $x = \sum_{i=1}^n\beta_i(x)\eta_i''^T$ where $\beta_i(x) \in \mathbb{R}$. We can compute $\beta_i(x)$ as a continuous function of $x$. Recall that $\|A\| = \max_{\|x\| = 1}\|Ax\|$. We consider $x$ such that $\|x\| = 1$. Then $\beta_i(x)$ is a continuous function on a compact space and thus has a maximum. Let $\alpha_i = \max\{|\beta_i(x)|~|~\|x\| = 1\}$. Note that $x^TA^\prime = \sum_{i=1}^n\lambda_i\beta_i(x)\eta_i''^T$. It follows that $$\|x^TA - x^TA^\prime\| \leq \sum_{i=1}^n\|(\beta_i(x)\eta_i''^T)A - \beta_i(x)\lambda_i\eta_i''^T\|$$  $$= \sum_{i=1}^n\|\beta_i(x)(\eta_i^TA - (\eta_i^T-\eta_i''^T)A) - \beta_i(x)\lambda_i\eta_i''^T\|$$ $$= \sum_{i=1}^n\|\beta_i(x)((\eta_i''^T - \eta_i^T)A + \lambda_i(\eta_i^T - \eta_i''^T))\| $$ $$< \sum_{i=1}^n(\|\alpha_i A\| + \|\alpha_i\lambda_i\|)\delta$$ and so if we set $\delta = \epsilon/(2\sum_{i=1}^n\|\alpha_i A\| + \|\alpha_i\lambda_i\|)$, then $\|x^TA - x^TA^\prime\| < \epsilon/2$. 

Given $\lambda_1,\hdots,\lambda_n$, for any $\rho > 0$ we can obviously find a set $\{\lambda_1',\hdots,\lambda_n'\}$ such that $|\lambda_i - \lambda_i'| < \rho$ for all $i$, $\lambda_i' \neq 0$ for all $i$, and $\lambda_i = \overline{\lambda_j}$ implies $\lambda_i' = \overline{\lambda_j'}$. Now, define $A''$ by  $\eta_i''^TA'' = \lambda_i'\eta_i''^T$ and $A'' \in \mathcal{M} \cap \mathcal{N} \cap \mathcal{O}$. As before, if the perturbation of eigenvalues is performed in such a way that real eigenvalues remain real and complex conjugates remain conjugate, $A''$ is real. It follows that $$\|x^TA' - x^TA''\| \leq \sum_{i=1}^n\|\beta_i(x)\lambda_i\eta_i''^T - \beta_i(x)\lambda_i'\eta_i''^T\|$$ $$= \sum_{i=1}^n\|\beta_i(x)\eta_i''^T(\lambda_i - \lambda_i')\| < \sum_{i=1}^n\|\alpha_i\eta_i''^T\|\rho.$$ If we set $\rho = \epsilon/(2\sum_{i=1}^n\|\alpha_i\eta_i''^T\|)$, then $\|x^TA' - x^TA''\| < \epsilon/2$. Finally we have $\|x^TAx - x^TA''\| = \|x^TA - x^TA'' + x^TA' - x^TA'\| \leq \|x^TA - x^TA'\| + \|x^TA' - x^TA''\| < \epsilon/2 + \epsilon/2 = \epsilon$. Since this inequality holds for all $x$ such that $\|x\| = 1$, indeed $\|A - A''\| < \epsilon$, and the claim is proven.
\end{proof}

We emphasize that many well-known linear controllable systems, such as the discrete double integrator, RLC circuit, and linearized pendulum \cite{amato2005finite}, contain $A$ matrices which satisfy the conditions of Lemma~\ref{matrixProperties_GeneralCase}. Also, these generic assumptions are not necessary, but sufficient to guarantee uniqueness. For example, a row perturbation of $A$ in \eqref{example} clearly does not satisfy the generic assumptions in \mbox{Lemma \ref{matrixProperties_GeneralCase}}, but the reachable sets of \eqref{discreteLinearSystem} with this new matrix can be used to uniquely generate the dynamics, which implies this method can be applied to a larger set of systems. Finding such non-generic assumptions which guarantee uniqueness is a highly involved problem and remains for future work. In the proof below, we will use the assumptions in \mbox{Lemma \ref{matrixProperties_GeneralCase}} to prove that the dynamics derived from reachable sets are generically unique, at least for an asymmetric input set.

\begin{thm}\label{Theorem_generalUniqueMatrix}
Let $\mathcal{U} = [c,d]$, where $c \neq \pm d$. Let $\eta_i~\in~\mathbb{C}^n$ for $i\in\{1,\ldots,n\}$ be the left eigenvectors of $A$. Let the sequence $\{\mathcal{R}(j,0)\}_{j\in\mathbb{N}}$ be generated by system \eqref{discreteLinearSystem} for system matrices $(A,b)$ and $(A',b')$, where $A,\,A' \in GL(n)$, $A$ and $A'$ have $n$ distinct eigenvalues, and $b^T\eta_i \neq 0$ for all $i$. Then, $(A,b) = (A',b')$.
\end{thm}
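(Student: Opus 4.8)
The plan is to work entirely with the sequence $\{A^{i-1}b\,\mathcal{U}\}_{i\in\mathbb{N}}$, which Theorem~\ref{discreteMinkowskiDifference} extracts from the reachable sets. Since $(A,b)$ and $(A',b')$ generate the same reachable sets, they generate the same Minkowski differences, so $A^{i-1}b\,\mathcal{U}=A'^{\,i-1}b'\,\mathcal{U}$ for every $i\in\mathbb{N}$. The first observation is that, because $\mathcal{U}=[c,d]$ is an interval, each set $A^{i-1}b\,\mathcal{U}$ is simply the line segment $\{t\,A^{i-1}b:t\in[c,d]\}$, and likewise on the primed side. The whole argument then reduces to reading off the vectors $A^{i-1}b$ from these segments and propagating the resulting equalities into a statement about $A-A'$.

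The crux is a segment-matching step, and this is where I expect the main difficulty to lie and where the hypothesis $c\neq\pm d$ is indispensable. Writing $v=A^{i-1}b$ and $w=A'^{\,i-1}b'$, both nonzero since $A,A'\in GL(n)$ and $b,b'\neq 0$, the equality $\{tv:t\in[c,d]\}=\{tw:t\in[c,d]\}$ forces the two lines $\mathbb{R}v$ and $\mathbb{R}w$ to coincide (two distinct lines through the origin meet only at the origin), so $w=\gamma v$ for some $\gamma\neq 0$. Matching endpoints then requires $[c,d]=\gamma[c,d]$ as intervals, whose only solutions are $\gamma=1$, or $\gamma=-1$ together with $c=-d$; the assumption $c\neq\pm d$ rules out the orientation-reversing case and yields $\gamma=1$, i.e.\ $A^{i-1}b=A'^{\,i-1}b'$. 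This is precisely the ambiguity that survives when $\mathcal{U}=[-c,c]$, which is what makes the symmetric case only $\pm$-unique. Taking $i=1$ gives $b=b'$, and then for every $k\ge 0$ we obtain $A^k b=A'^{\,k}b$.

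The final step converts these moment equalities into $A=A'$. Setting $v_k=A^k b=A'^{\,k}b$, I would note that $A v_k=A^{k+1}b=A'^{\,k+1}b=A' v_k$, so $(A-A')v_k=0$ for all $k\ge 0$; that is, $A-A'$ annihilates the controllability subspace $\mathrm{span}\{b,Ab,\ldots,A^{n-1}b\}$. It then remains to invoke the generic hypotheses: since $A$ has $n$ distinct eigenvalues it is diagonalizable with left eigenvectors $\eta_1,\ldots,\eta_n$, and the condition $b^T\eta_i\neq 0$ for all $i$ is exactly the Popov--Belevitch--Hautus criterion guaranteeing that $(A,b)$ is controllable. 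Hence $\mathrm{span}\{b,Ab,\ldots,A^{n-1}b\}=\mathbb{R}^n$, which forces $A-A'=0$. Combined with $b=b'$, this gives $(A,b)=(A',b')$.

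A couple of remarks on bookkeeping: the invertibility and distinct-spectrum hypotheses on $A'$ are used only lightly, to keep the segments $A'^{\,i-1}b'\,\mathcal{U}$ non-degenerate, whereas the real work in the last paragraph rests on controllability of $(A,b)$. Note also that the equalities $A^kb=A'^{\,k}b$ needed for $k=0,\ldots,n-1$ come from the reachable sets at times $i=1,\ldots,n+1$, consistent with the earlier claim that the first $n+1$ reachable sets suffice.
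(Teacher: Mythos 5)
Your proposal is correct, and its core differs from the paper's proof in a substantive way. Both arguments begin the same: extract the segments $A^{i-1}b\,\mathcal{U}=A'^{\,i-1}b'\,\mathcal{U}$ via Theorem~\ref{discreteMinkowskiDifference} and match endpoints, using $c\neq\pm d$ to kill the orientation-reversing case, yielding $b=b'$ and $A^kb=A'^{\,k}b$ for all $k$ (the paper does this after a coordinate change sending $b$ to $e_1$, which your route does not need). The divergence is in how these moment equalities force $A=A'$. The paper first proves the commutation relation $A=A'^{(-1)}AA'$ by expanding \eqref{Lemma2_eq2} in eigenvector coordinates and invoking invertibility of a Vandermonde matrix (this is where distinct eigenvalues enter), then concludes that $A$ and $A'$ share eigenvectors, and finally uses $\eta_{i1}\neq 0$ to equate eigenvalues. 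You instead observe that $(A-A')A^kb=0$ for all $k$, so $A-A'$ annihilates the Krylov subspace $\mathrm{span}\{b,Ab,\ldots,A^{n-1}b\}$, and that the theorem's hypotheses (distinct eigenvalues plus $b^T\eta_i\neq 0$) are precisely the PBH eigenvector test for controllability of $(A,b)$, so that subspace is all of $\mathbb{R}^n$. This is shorter, more elementary, places weaker demands on $A'$ (you never really need $A'$ invertible or with distinct spectrum: set equality itself forces $A'^{\,i-1}b'\neq 0$), and makes transparent why the generic assumptions matter — they are exactly controllability, the same hypothesis as in Theorem~\ref{calculateA}. What the paper's heavier machinery buys is reusability: in Theorem~\ref{Theorem_genericUniqueMatrix_Symmetric} the equalities hold only up to signs $(-1)^{p(k)}$, which destroys the clean identity $(A-A')A^kb=0$, whereas the commutation-plus-Vandermonde scaffolding survives (with case analysis on the sign pattern) and delivers the $\pm$-uniqueness result. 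One tiny bookkeeping point: your Krylov step needs $A^kb=A'^{\,k}b$ for $k=0,\ldots,n$ (not just $n-1$), which is exactly what the reachable sets at times $1,\ldots,n+1$ provide, consistent with your closing remark.
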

\begin{proof}
If $(A,b)$ and $(A^\prime,b^\prime)$ for system \eqref{discreteLinearSystem} produce an identical sequence $\{\mathcal{R}(j,0)\}_{j\in\mathbb{N}}$, then $\mathcal{R}(1,0) = b\mathcal{U} = b^\prime\mathcal{U}$, i.e., there are two options: (i) $bc = b'c$ and $bd = b'd$ or (ii) $bc = b'd$ and $bd = b'c$. If the latter option is true, then $bcd = b'd^2 = b'c^2$, so $b' = 0$. In that case, $b = 0$, so $b = b'$. If the former option is true, because at least one of $c$ or $d$ is non-zero, again $b = b'$.

Let us now perform a coordinate transformation $z = Mx$, where $M$ is chosen so that $Mb=e_1$. Such an $M$ exists since $b\neq 0$. Then, $\dot{z}=MAx+Mbu=MAM^{-1}z+e_1u$. If we define $\hat{A}=MAM^{-1}$, by our assumptions $\hat{A}$ is invertible and $\hat{A}$ has distinct eigenvalues. Additionally, it is trivially verified that left eigenvectors of $\hat{A}$ are $(M^{-1})^T\eta_i$. Since $M^{-1}e_1=b$, the assumption $b^T\eta_i\neq 0$ is equivalent to the first element of the left eigenvectors of $\hat{A}$ being non-zero. To simplify the notation, by a standard abuse we now let $(A,e_1)$, $(A',e_1)$ represent the system matrices after performing the above transformation. By the above discussion, we are then assuming that $A$ and $A'$ are invertible, have distinct eigenvalues, and that $\eta_{i1}\neq 0$ for all $i$.

Noting that the two systems produce the same reachable sets, by \eqref{theoremABU} it follows that $A^ke_1=A'^ke_1$ for all $k\in\mathbb{N}$. By the same logic as in the first paragraph of the proof, we see that since $c \neq -d$, then $A^kce_1 = A^{\prime k}ce_1$ and $A^kde_1 = A^{\prime k}de_1$ is satisfied for all $k \in \mathbb{N}$, giving us the relation 

\begin{equation}\label{Lemma2_eq1}
    A^ke_1 = A^{\prime k} e_1\quad\forall\,k\in\mathbb{Z}_{\geq 0}.
\end{equation}

\noindent Equation \eqref{Lemma2_eq1} implies $A^{k-1}A^\prime e_1 = A^{\prime k-1}Ae_1$ and $A^{\prime k-2}Ae_1 = A^{k-1}e_1$ for all $k \geq 2$. We have $$A^{k-1}A^\prime e_1 = A^{\prime k-1}Ae_1 = A^\prime A^{\prime k-2}Ae_1 = A^\prime A^{k-1}e_1.$$ Hence, $A^{k-1}A^\prime e_1 = A^\prime A^{k-1}e_1$; since $A^\prime$ is invertible,

\begin{equation}\label{Lemma2_eq2}
    A^ke_1 = A^{\prime(-1)}A^kA^{\prime}e_1\quad\forall\,k\in\mathbb{Z}_{\geq 0}.
\end{equation}

Let $v_i$ denote the right eigenvectors of $A$ and $v_i',\,\eta_i'$ denote the right and left eigenvectors of $A^{\prime (-1)}AA^\prime$ respectively. Since $A$ and $A^{\prime (-1)}AA^\prime$ are similar matrices, their eigenvalues are equal \cite{strang2006linear}. Let $A = VDV^{-1}$ and $A'^{(-1)}AA' = V'DV'^{(-1)}$ where the rows of $V^{-1}$ and $V'^{-1}$ are $\eta_i^T$ and $\eta_i'^T$ respectively and the columns of $V$ and $V'$ are $v_i$ and $v_i'$ respectively. By our assumptions, $\eta_{i1} \neq 0$, so we can now scale the $\eta$'s so that $\eta_{i1} = 1$. We then redefine $v_i$ to be the newly scaled right eigenvectors such that $\eta_{i1} = 1$. Next, we write \eqref{Lemma2_eq2} in tensor notation \cite{strang2006linear} and get $$\sum_{i} \lambda_i^kv_i = \sum_{i} \lambda_i^kv_i^\prime\eta_{i1}'^T\quad\forall\,k\in\mathbb{Z}_{\geq 0}$$ which implies 

\begin{equation}\label{Thm2_eq1}
    \sum_i\lambda_i^k(v_i - v_i^\prime\eta_{i1}'^T) = 0\quad\forall\,k\in\mathbb{Z}_{\geq 0}.
\end{equation}

Taking $k \in \{0,\hdots,n-1\}$ we have a series of $n$ equations. For the $j$-th element of any $v_i$ and $v_i^\prime$, we have $$\Lambda S_j = \begin{bmatrix}1 & \hdots & 1\\ \lambda_1 & \hdots & \lambda_n \\ \vdots & \vdots & \vdots\\ \lambda_1^{n-1} & \hdots & \lambda_n^{n-1}\end{bmatrix}\begin{bmatrix}v_{1j} - v_{1j}^\prime\eta_{11}'^T \\ v_{2j} - v_{2j}^\prime\eta_{21}'^T\\ \vdots \\ v_{nj} - v_{nj}^\prime\eta_{n1}'^T\end{bmatrix} = \begin{bmatrix}0\\0\\\vdots\\0\end{bmatrix}$$ for any $j \in \{1,\hdots,n\}$. Notice that $\Lambda \in \mathbb{C}^{n\times n}$ is the square Vandermonde matrix \cite{klinger1967vandermonde}. Recall that the Vandermonde matrix is invertible if elements $\lambda_i$ are distinct for all $i$, which holds by assumption. If $\eta_{i1}' = 0$ for any $i$, then $v_i = 0$, which contradicts the assumption that $A$ is diagonalizable. Consequently, $\eta_{i1}' \neq 0$ for all $i$, so similar to the previous step, we can scale $v_i'$ and $\eta_{i1}'$ such that $\eta_{i1}' = 1$ for all $i$. It follows that $v_{ij} = v_{ij}^\prime$ for all $i,j$ since $\Lambda$ is invertible. Therefore, $A = A^{\prime(-1)}AA^\prime$. 

Recall that we assumed that all eigenvalues of $A$ are distinct. Thus, since $A$ and $A^\prime$ commute, we can conclude that $A$ and $A^\prime$ have the same eigenvectors \cite{strang2016introduction}. Recall that $A$ and $A^\prime$ are both diagonalizable. If we take the eigenvalue expansion of $A$ and $A'$ and multiply both on the left by $V^{-1}$, then equation \eqref{Lemma2_eq1} implies $$D^k\begin{bmatrix}\eta_{11}\\\eta_{21} \\ \vdots \\ \eta_{n1}\end{bmatrix}= D^{\prime k}\begin{bmatrix}\eta_{11}\\\eta_{21}\\ \vdots \\ \eta_{n1}\end{bmatrix}\quad \forall~k\in\mathbb{N},$$
where $D'$ is the diagonal matrix with eigenvalues of $A'$ on the diagonal. Subtracting the right hand side from both sides reveals that \eqref{Lemma2_eq1} implies $$(\lambda_i^k - \lambda_i^{\prime k})\eta_{i1} = 0\quad \forall~k \in \mathbb{N}.$$ By assumption, $\eta_{i1} \neq 0$ for all $i$, so $$\lambda_i^k = \lambda_i^{\prime k}\quad \forall~k \in \mathbb{N}.$$ Therefore, both $A$ and $A^\prime$ have the same eigenvectors and eigenvalues, hence $A = A^\prime$.
\end{proof}

Theorem \ref{Theorem_generalUniqueMatrix} proves that given reachable sets of generic system \eqref{discreteLinearSystem}, the pair $(A,b)$, i.e., the system dynamics, are uniquely defined when the set of control inputs is not symmetric around $0$. We now want to address the degenerate case where $\mathcal{U} = [-c,c]$. It can be easily seen that in such a case, system \eqref{discreteLinearSystem} with $(A,b)$ and $(-A,-b)$ will produce the same reachable sets. To discuss a relaxed notion of system uniqueness, we provide a formal definition of $\pm$-uniqueness. 

\begin{mydef}
The system dynamics $(A,b)$ of \eqref{discreteLinearSystem} are \mbox{$\pm$-unique} if $(A,b)$ and $-(A,b)$ generate the same reachable sets, but there do not exist other pairs $(A^\prime,b^\prime)$ which generate the same reachable sets.
\end{mydef}

We conjecture that in the case when $\mathcal{U}$ is symmetric around $0$ -- a scenario common in many controls applications \cite{chen1984linear} -- the dynamics are $\pm$-unique.

\begin{conj}\label{Conjecture}
Let $\mathcal{U} = [-c,c]$. Let the sequence $\{\mathcal{R}(i,0)\}_{i\in\mathbb{N}}$ be generated by $(A,b)$, where $A^2$ has distinct eigenvalues and $(A,b)$ are known to satisfy the assumptions of Theorem \ref{Theorem_generalUniqueMatrix}. Then, $(A,b)$ is $\pm$-unique.
\end{conj}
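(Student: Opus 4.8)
The plan is to adapt the argument of Theorem \ref{Theorem_generalUniqueMatrix} to the symmetric case $\mathcal{U}=[-c,c]$, tracking exactly where the asymmetry was used and showing that in the symmetric case it collapses precisely to a global sign ambiguity. The key observation is that the entire proof of Theorem \ref{Theorem_generalUniqueMatrix} rests on the relation $A^k e_1 = A'^k e_1$ for all $k$, which was derived by matching $b\mathcal{U}$ and then $A^{k}b\mathcal{U} = A'^{k}b\mathcal{U}$ via \eqref{theoremABU}. When $\mathcal{U}=[-c,c]$, the first paragraph's case (ii) no longer forces $b'=0$: matching $b\mathcal{U}=b'\mathcal{U}$ now admits both $b'=b$ and $b'=-b$. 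So first I would redo that opening step to conclude $b'=\pm b$, and then separately treat the two signs.

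Second, I would handle the set equality $A^{k}b\mathcal{U} = A'^{k}b'\mathcal{U}$ for each $k$. Since each of these is a symmetric segment centered at the origin, equality of the segments $[-|A^k b|c, |A^k b|c]$-type sets (more precisely, the line segments $\{A^k b\,t : t\in[-c,c]\}$) forces $A^{k}b = \pm A'^{k} b'$, with the sign potentially depending on $k$. The crucial sub-step is to show the sign is in fact the same for all $k$. I expect this to follow from the fact that $A^0 b = b$ and $A^1 b$ (together with the assumption that $A$ has distinct nonzero eigenvalues and $b^T\eta_i\neq 0$, so the Krylov vectors $b, Ab, \dots, A^{n-1}b$ span) pin down a consistent orientation; a per-$k$ sign flip would be incompatible with the linear-algebraic relations among consecutive Krylov vectors. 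Once a uniform sign $\sigma\in\{+1,-1\}$ is established so that $A^{k}b = \sigma A'^{k}b'$ for all $k$, I would absorb $\sigma$ by replacing $(A',b')$ with $(-A',-b')$ if $\sigma=-1$ (legitimate since $-(A',b')$ generates the same reachable sets under the symmetric $\mathcal{U}$), reducing to the exact hypotheses $A^k e_1 = A'^k e_1$ of Theorem \ref{Theorem_generalUniqueMatrix} after the coordinate transformation $Mb=e_1$.

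Third, having reduced to $A^k e_1 = A'^k e_1$ for all $k$, I would invoke the machinery already developed in Theorem \ref{Theorem_generalUniqueMatrix} verbatim: the commutation argument yielding $A = A'^{(-1)}AA'$ via the Vandermonde inversion, then the conclusion $A=A'$ and hence $b=b'$ in the transformed coordinates, which unwinds to $(A,b)=\pm(A',b')$ in the original coordinates. I would need to verify that the hypotheses of Theorem \ref{Theorem_generalUniqueMatrix} are genuinely available here: the conjecture assumes $(A,b)$ satisfies those assumptions, but it makes the weaker structural hypothesis that $A^2$ has distinct eigenvalues rather than $A$ itself, so I would confirm that $A^2$ having distinct eigenvalues is what is needed to disambiguate the sign at the level of the squared map (where the $\pm$ ambiguity cancels) before lifting back.

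The main obstacle I anticipate is the uniform-sign step in the second paragraph — ruling out a $k$-dependent sign $\sigma_k$ in $A^k b = \sigma_k A'^k b'$. This is exactly why the restriction to $n=2$ appears: in two dimensions one can write explicit relations among $b, Ab, A^2 b$ and use the Cayley--Hamilton recursion $A^2 b = \tau Ab - \delta b$ (trace $\tau$, determinant $\delta$) to show the signs $\sigma_0,\sigma_1,\sigma_2,\dots$ must agree, whereas in higher dimensions the combinatorics of sign assignments across the Krylov sequence is harder to control directly. Concretely, for $n=2$ I would take $\mathcal{R}(1,0)$, $\mathcal{R}(2,0)$, $\mathcal{R}(3,0)$, extract $b\mathcal{U}$, $Ab\mathcal{U}$, $A^2 b\mathcal{U}$ via \eqref{theoremABU}, and use the fact that $b$ and $Ab$ are linearly independent (by controllability from the eigenvector assumption) together with the Cayley--Hamilton relation to force a single global sign, after which the squared-eigenvalue hypothesis and the reduction to Theorem \ref{Theorem_generalUniqueMatrix} finish the argument.
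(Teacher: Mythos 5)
Your strategy fails at precisely the step you flag as crucial, and not because it is hard to prove: the claim that the sign in $A^k b = \sigma_k A'^k b'$ is uniform in $k$ is \emph{false}. The conjecture's own conclusion tells you this. The pair $(A',b') = (-A,-b)$ generates exactly the same reachable sets (it must, by the definition of $\pm$-uniqueness), and for it $A'^k b' = (-A)^k(-b) = (-1)^{k+1}A^k b$, so the signs alternate with $k$. Any argument that "forces a single global sign" would therefore prove that $(-A,-b)$ does \emph{not} generate the same sets, which is absurd; in particular your Cayley--Hamilton idea cannot work, since the recursion $A^2b = \tau Ab - \delta b$ for $A$ is matched by the recursion with trace $-\tau$ and the same determinant $\delta$ for $-A$, which is exactly what the alternating pattern encodes. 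Your absorption step is broken for the same reason: replacing $(A',b')$ by $(-A',-b')$ multiplies $A'^k b'$ by $(-1)^{k+1}$, a $k$-dependent factor, so it cannot convert a uniform sign $-1$ into a uniform sign $+1$. Any correct proof must be structured so that a $k$-dependent sign survives into the algebra rather than being eliminated at the outset.

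This is how the paper actually proceeds --- and note the paper does not prove the conjecture either; it proves only the two-dimensional case (Theorem \ref{Theorem_genericUniqueMatrix_Symmetric}), with the general case explicitly left open. There, the relation is kept as $A^k e_1 = (-1)^{p(k)}A'^k e_1$ with unknown signs, one derives $A^{k-1}e_1 = (-1)^{q(k)}A'^{-1}A^{k-1}A'e_1$, and then performs a case analysis on the sign pattern $(q(1),q(2),q(3))$: constant patterns are handled by the ordinary Vandermonde matrix, patterns containing two consecutive equal signs by the shifted Vandermonde $\begin{bmatrix}\lambda_1^k & \lambda_2^k\\ \lambda_1^{k+1} & \lambda_2^{k+1}\end{bmatrix}$ (invertible since the eigenvalues are distinct and nonzero), and alternating patterns by $\begin{bmatrix}1 & 1\\ \lambda_1^2 & \lambda_2^2\end{bmatrix}$, whose invertibility is exactly where the hypothesis that $A^2$ has distinct eigenvalues enters; that hypothesis is used a second time when the analysis yields $A = -A'^{-1}AA'$, since then $A^2 = A'^{-1}A^2A'$ commutes with $A'$ and the distinctness of $A^2$'s eigenvalues gives shared eigenvectors. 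The output is $A = \pm A'^{-1}AA'$ and eventually $\lambda_i = \pm\lambda_i'$, hence $A = \pm A'$: the sign ambiguity is carried to the end, never removed. Incidentally, you also misread the hypothesis: requiring $A^2$ to have distinct eigenvalues is \emph{stronger} than requiring it of $A$ (it rules out eigenvalue pairs $\lambda, -\lambda$), not weaker; it is an additional assumption on top of those of Theorem \ref{Theorem_generalUniqueMatrix}. The most your reduction idea could be salvaged into is proving that the sign pattern has the form $\sigma_k = \varepsilon\delta^k$ with $\varepsilon,\delta \in \{\pm 1\}$ (the patterns realized by $(\delta A,\varepsilon b)$), after which replacing $(A',b')$ by $(\delta A',\varepsilon b')$ reduces to Theorem \ref{Theorem_generalUniqueMatrix}; but proving the pattern has that form is essentially the case analysis the paper already does, so nothing is gained.
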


Proving the conjecture above requires extensive theoretical developments and remains for future work. As an illustration, we formally prove the conjecture to be true in the two-dimensional case. 

\begin{thm}\label{Theorem_genericUniqueMatrix_Symmetric}
Let $n=2$. Then, Conjecture \ref{Conjecture} is correct.
\end{thm}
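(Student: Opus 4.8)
The plan is to follow the proof of Theorem~\ref{Theorem_generalUniqueMatrix} while carefully tracking the sign ambiguity created by the symmetric input set. By Theorem~\ref{discreteMinkowskiDifference}, the reachable sets determine the sequence of symmetric segments $A^{k}b\,\mathcal{U}=[-cA^{k}b,\,cA^{k}b]$ for $k\in\mathbb{Z}_{\geq0}$, and such a segment fixes its generating vector only up to sign. Thus, if $(A,b)$ and $(A',b')$ produce the same reachable sets, then setting $w_k=A^kb$ and $w_k'=A'^kb'$ gives $w_k=\sigma_k w_k'$ for signs $\sigma_k\in\{+1,-1\}$. For a symmetric set, the two branches of the first paragraph of Theorem~\ref{Theorem_generalUniqueMatrix} collapse into the single statement $b'=\pm b$; replacing $(A',b')$ by $-(A',b')$ if necessary (which leaves the reachable sets unchanged) I may therefore assume $b'=b$, i.e. $\sigma_0=1$. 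I would then change coordinates so that $b=e_1$, exactly as in Theorem~\ref{Theorem_generalUniqueMatrix}. Here the hypotheses that $A\in GL(2)$ has distinct eigenvalues and $b^T\eta_i\neq0$ for all $i$ are used to conclude, via orthogonality of left and right eigenvectors belonging to distinct eigenvalues, that $b$ is a cyclic vector; hence $\{w_0,w_1\}=\{e_1,Ae_1\}$ is a basis of $\mathbb{R}^2$ and, more generally, every consecutive pair $\{w_k,w_{k+1}\}$ is independent.

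Next I would exploit $n=2$ to reduce the matrix identities to a scalar recursion. By the Cayley--Hamilton theorem, $w_{k+2}=\tau w_{k+1}-\delta w_k$ with $\tau=\mathrm{tr}(A)$ and $\delta=\det(A)$, and similarly $w_{k+2}'=\tau' w_{k+1}'-\delta' w_k'$ with $\tau'=\mathrm{tr}(A')$, $\delta'=\det(A')$. Substituting $w_k'=\sigma_k w_k$ into the primed recursion and matching coordinates in the basis $\{w_k,w_{k+1}\}$ (legitimate because consecutive iterates are independent) yields, for every $k\in\mathbb{Z}_{\geq0}$, the scalar relations $\tau'=\tau\,\sigma_{k+1}\sigma_{k+2}$ and $\delta'=\delta\,\sigma_k\sigma_{k+2}$.

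The decisive step is to invoke the hypothesis that $A^2$ has distinct eigenvalues. In two dimensions, together with $A$ having distinct eigenvalues, this is equivalent to $\tau\neq0$. Because the left-hand side $\tau'$ is fixed, $\sigma_{k+1}\sigma_{k+2}=\tau'/\tau$ must be a constant $p\in\{+1,-1\}$; since $\delta\neq0$ by invertibility, $\sigma_k\sigma_{k+2}=\delta'/\delta$ is also constant, and combining the two forces the signs to be two-periodic, $\sigma_{k+2}=\sigma_k$, together with $\delta'=\delta$ and $\tau'=p\,\tau$. Feeding this back gives $A'w_k=\sigma_k\sigma_{k+1}w_{k+1}=p\,w_{k+1}=pA w_k$ on the basis $\{w_0,w_1\}$, so $A'=pA=\pm A$. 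With the normalization $b'=b$ and the initial sign reduction undone, every pair $(A',b')$ generating the reachable sets is a sign change of $(A,b)$, which is the assertion of $\pm$-uniqueness in Conjecture~\ref{Conjecture} for $n=2$.

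I expect the sign bookkeeping to be the main obstacle. Unlike the asymmetric case of Theorem~\ref{Theorem_generalUniqueMatrix}, where segment equality yields $A^ke_1=A'^ke_1$ outright, each segment equality here leaves one undetermined sign, and the whole argument hinges on showing that these signs cannot vary freely. The role of the $A^2$-distinct-eigenvalue assumption is precisely to exclude the degenerate case $\tau=0$: when $\mathrm{tr}(A)=0$ the relation $\tau'=\tau\,\sigma_{k+1}\sigma_{k+2}$ is vacuous and no longer pins the consecutive signs, so the method above would not close. Verifying that $\tau\neq0$ indeed freezes the sign sequence into the two-periodic pattern, and checking at the outset that the endpoint-swap branch of Theorem~\ref{Theorem_generalUniqueMatrix} genuinely degenerates to $b'=\pm b$ for a symmetric set so that no extra family of solutions slips in, are the two points that would require the most care.
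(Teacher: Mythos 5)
Your proof is correct, but it takes a genuinely different route from the paper's. The paper mirrors the machinery of Theorem~\ref{Theorem_generalUniqueMatrix}: from $A^k e_1 = (-1)^{p(k)}A'^k e_1$ it derives $A^{k-1}e_1 = (-1)^{q(k)}A'^{(-1)}A^{k-1}A'e_1$, runs a case analysis over the sign patterns $(q(1),q(2),q(3))$ using Vandermonde matrices in the eigenvalues to force $v_i = v_i'$ and hence $A = \pm A'^{(-1)}AA'$, and then uses commutativity (of $A'$ with $A$, or with $A^2$ in the alternating-sign case, which is exactly where distinctness of the eigenvalues of $A^2$ is invoked) to get shared eigenvectors and finally $\lambda_i = \pm\lambda_i'$, so $A = \pm A'$. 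You instead collapse everything to scalars via Cayley--Hamilton: matching coefficients of the two recursions on the independent pair $\{w_k, w_{k+1}\}$ gives $\tau' = \tau\,\sigma_{k+1}\sigma_{k+2}$ and $\delta' = \delta\,\sigma_k\sigma_{k+2}$, and the $A^2$ hypothesis enters precisely as $\tau \neq 0$, which freezes the sign sequence into a two-periodic pattern and yields $A' = \pm A$ directly on the basis $\{b, Ab\}$. Both proofs use the $A^2$ assumption to dispose of the alternating-sign scenario, but yours avoids eigenvector expansions, Vandermonde determinants, and the case analysis entirely, and it is in fact slightly stronger: you impose nothing on the competing pair $(A',b')$ (Cayley--Hamilton holds for any $2\times 2$ matrix, and $\delta' = \delta \neq 0$ emerges as a conclusion rather than an assumption), whereas the paper's argument assumes $A'$ also satisfies the generic hypotheses. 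What the paper's heavier machinery buys is a template that plausibly scales to Conjecture~\ref{Conjecture} in dimension $n$; your trace/determinant bookkeeping is intrinsically two-dimensional, and generalizing it would require tracking all $n$ characteristic-polynomial coefficients with more intricate sign combinatorics. One caveat you share with the paper: both arguments actually conclude $(A',b') \in \{(\pm A, \pm b)\}$ with the two signs independent --- and indeed $(-A, b)$ generates the same reachable sets as $(A,b)$ when $\mathcal{U}$ is symmetric --- so what is really proved is uniqueness up to independent sign flips of $A$ and $b$, which is how the paper's definition of $\pm$-uniqueness must be read for the statement to hold at all.
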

\begin{proof}
Similarly to the proof of Theorem \ref{Theorem_generalUniqueMatrix}, we have two options: $bc = b'c$ or $bc = -b'c$. In the former case, we reach the same result as before, namely $b = b'$. In the latter case, we obtain $b = -b'$. Altogether, we get $b = (-1)^{p(0)}b'$ where $p(0) \in \{0,1\}$.

As in Theorem \ref{Theorem_generalUniqueMatrix}, through a coordinate transformation, we assume without loss of generality that $b' = e_1$. Then $b\mathcal{U} = (-1)^{p(0)}b^\prime\mathcal{U} = (-1)^{p(0)}[-c,c]e_1$. Following the same steps as in the beginning of the proof in Theorem \ref{Theorem_generalUniqueMatrix}, with a standard abuse of notation, we let $A$, $A^\prime$ represent the system dynamics in this new basis where $A$ and $A'$ satisfy our assumptions. Also, we find that if $\mathcal{U} = [-c,c]$, then we arrive at the relation 

\begin{equation}\label{Theorem3_eq1}
    A^ke_1 = (-1)^{p(k)}A^{\prime k} e_1\quad \forall\,k\in\mathbb{Z}_{\geq 0}.
\end{equation}

When $k = 2$, we see that regardless of $p(1)$, $AA'e_1 = (-1)^{p(2)}A'Ae_1$. Using this fact along with equation \eqref{Theorem3_eq1} implies $A^{k-1}A^\prime e_1 = (-1)^{p(k)}A^{\prime k-1}Ae_1$ for all $k \geq 1$ and $A^{k-2}A^\prime e_1 = (-1)^{p(k-1)}A^{\prime k-2}Ae_1$ for all $k \geq 2$. We have $$(-1)^{p(k)}A^{\prime k-1}Ae_1 = (-1)^{p(k)}A^\prime A^{\prime k-2}Ae_1$$ $$=(-1)^{p(k)}(-1)^{p(k-1)}(-1)^{p(1)}A^\prime A^{k-1}e_1.$$ Hence, $A^{k-1}A^\prime e_1 = (-1)^{p(k)}(-1)^{p(k-1)}(-1)^{p(1)}A^\prime A^{k-1}e_1$; since $A^\prime$ is invertible, $$A^{k-1}e_1 = \frac{A'^{(-1)} A^{k-1}A^\prime e_1}{(-1)^{p(k)}(-1)^{p(k-1)}(-1)^{p(1)}}.$$ We define $q(k) \in \{0,1\}$ by $(-1)^{q(k)}~=~((-1)^{p(k)}(-1)^{p(k-1)}(-1)^{p(1)})^{-1} = (-1)^{p(k)}(-1)^{p(k-1)}(-1)^{p(1)}$. We then have $$A^{k-1}e_1 = (-1)^{q(k)}A^{\prime(-1)}A^{k-1}A^{\prime}e_1\quad\forall\,k\in\mathbb{N}.$$ It holds that $A^{k-1} = A'^{(-1)}A^{k-1}A'$ have the same eigenvalues, so $A^{k-1} = -A'^{(-1)}A^{k-1}A'$ must have eigenvalues of opposite sign. That is, if $\lambda_i$ and $\lambda_i'$ are the eigenvalues of $A$ and $\pm A'^{(-1)}AA'$ respectively, then $\lambda_i = \pm \lambda_i'$. Following the same steps as in the proof of Theorem \ref{Theorem_generalUniqueMatrix} we get $$\sum_{i} \lambda_i^{k-1}v_i = (-1)^{q(k)} \sum_{i} \lambda_i^{k-1}v_i^\prime\eta_{i1}'^T\quad\forall\,k\in\mathbb{N}.$$ Subtracting the right hand side from both sides gives us

\begin{equation}\label{Theorem3_eq2}
    \sum_i\lambda_i^{k-1}(v_i - (-1)^{q(k)} v_i^\prime\eta_{i1}'^T) = 0\quad\forall\,k\in\mathbb{N}.
\end{equation}

We now show that if $(A,B)\in(\mathbb{R}^{2\times 2},\mathbb{R}^2)$, then equation \eqref{Theorem3_eq2} implies $A = \pm A^{\prime(-1)}AA^\prime$. Recall that $q(k) \in \{0,1\}$ and so $(q(1),q(2)) \in \{(0,0),\,(0,1),\,(1,0),\,(1,1)\}$. When $(q(1),q(2)) = (0,0)$, equation \eqref{Theorem3_eq2} is the same as equation \eqref{Thm2_eq1} for $k=1$ and $k=2$. If we write these equations in matrix form as in Theorem \ref{Theorem_generalUniqueMatrix} we again have the Vandermonde matrix on the left-hand side. Following the same steps as Theorem \ref{Theorem_generalUniqueMatrix}, we see that $v_i = v_i'$ for all $i$. Since $\lambda_i = \pm \lambda_i'$, then $A = \pm A^{\prime(-1)}AA^\prime$. Similarly, if $(q(1),q(2)) = (1,1)$, if we follow the same procedure to find $A = \pm A^{\prime(-1)}AA^\prime$.

The most interesting cases are when $(q(1),q(2)) \in \{(0,1),(1,0)\}$. Let us first consider $(q(1),q(2)) = (1,0)$. Recall $q(k) \in \{0,1\}$, so if $q(3) = 0$, then $(q(2),q(3)) = (0,0)$. If $(q(k),q(k+1)) = (0,0)$ for some $k$, we then have

\begin{equation}\label{Thm2_eq2}
    \Lambda S_j = \begin{bmatrix}\lambda_1^k & \lambda_2^k \\ \lambda_1^{k+1} & \lambda_2^{k+1}\end{bmatrix} \begin{bmatrix}v_{1j} - v^\prime_{1j}\eta_{11}'^T\\v_{2j} - v^\prime_{2j}\eta_{21}'^T\end{bmatrix} = \begin{bmatrix}0\\0\end{bmatrix}.
\end{equation}

\noindent We note $$\mathrm{det}(\Lambda) = \lambda_1^k\lambda_2^k\begin{vmatrix}1 & 1 \\ \lambda_1 & \lambda_2\end{vmatrix} \neq 0$$ since we have two non-zero scalars multiplied by the non-zero Vandermonde determinant in the case of distinct eigenvalues. Hence, $\Lambda$ as defined in \eqref{Thm2_eq2} is invertible and we again conclude that $A = \pm A^{\prime(-1)}AA^\prime$. 

We lastly consider cases where $q(k)$ is alternating, namely $\{q(k)\}_{k=1}^3 = (0,1,0)$ and $\{q(k)\}_{k=1}^3 = (1,0,1)$. In the former case, we have $$\Lambda S_j = \begin{bmatrix}1 & 1\\ \lambda_1^2 & \lambda_2^2\end{bmatrix} \begin{bmatrix}v_{1j} - v^\prime_{1j}\eta_{11}'^T\\v_{2j} - v^\prime_{2j}\eta_{21}'^T\end{bmatrix} = \begin{bmatrix}0\\0\end{bmatrix}.$$ The generic assumption that all eigenvalues are distinct modulo a sign implies $\Lambda$ is invertible, thus we again find $v_i = v_i'$ and thus $A = \pm A^{\prime(-1)}AA^\prime$. By following the same steps, we arrive at the same conclusion when $\{q(k)\}_{k=1}^3 = (1,0,1)$.

We now have that $A = (-1)^{q(2)} A^{\prime(-1)}AA^\prime$. If $q(2) = 0$, then $A$ and $A^\prime$ commute. Using assumptions of the theorem statement, we can conclude that $A$ and $A^\prime$ have the same eigenvectors \cite{strang2016introduction}. If $q(2) = 1$, then $A = -A^{\prime(-1)}AA^\prime$ and so $A^2 = A^{\prime(-1)}A^2A^\prime$. Clearly, $A^2$ and $A^\prime$ commute, and by the theorem statement, $A^2$ has distinct eigenvalues, which again implies that $A$ and $A^\prime$ share the same eigenvectors.

We now follow the same steps as in the latter part of the proof of Theorem \ref{Theorem_generalUniqueMatrix}. Namely, we can diagonalize $A$ and $A'$; taking the eigenvalue expansion of equation \eqref{Theorem3_eq1} and multiplying both sides on the left by the matrix of left eigenvectors gives us the series of equations $$\lambda_i^{k-1} = (-1)^{q(k)}\lambda_i^{\prime k-1}\quad \forall~k \in \mathbb{N}.$$ Since $q(2) = 0$ or $q(2) = 1$, then $\lambda_i = \lambda_i^\prime$ or $\lambda_i = -\lambda_i^\prime$ for all $i$. Since both $A$ and $A^\prime$ have the same eigenvectors and eigenvalues same to a sign, then $A = \pm A^\prime$.
\end{proof}

Theorem \ref{Theorem_generalUniqueMatrix} solves Problem \ref{problemStatement} in the generic case where $\mathcal{U} \neq [-c,c]$ while Theorem \ref{Theorem_genericUniqueMatrix_Symmetric} proves there exists a $\pm$-unique solution to Problem \ref{problemStatement} in the two-dimensional case where $\mathcal{U} = [-c,c]$. The proof of Theorem \ref{Theorem_genericUniqueMatrix_Symmetric} drives our intuition for Conjecture \ref{Conjecture} in general: intuitively, adding dimensions to the system should not make it more likely that multiple generic systems can produce the same reachable sets for all time, especially considering no two such systems exist when the input set is asymmetric. Formalizing this statement is left for future work. 

We remark that if the system dynamics do not satisfy the assumptions of Theorem \ref{Theorem_generalUniqueMatrix} or Theorem \ref{Theorem_genericUniqueMatrix_Symmetric}, they might not be (uniquely or $\pm$-uniquely) recoverable. However, using a slight perturbation of the reachable sets might recover a generic approximation of the true dynamics. Doing so, however, introduces challenges on the method of perturbing these sets. We leave such a discussion for future work.
\section{Solving for the System Dynamics}

We ultimately want to use reachable sets to solve for the system dynamics. Equation \eqref{theoremABU} of Theorem \ref{discreteMinkowskiDifference} already gives us a formula for calculating $A^{i-1}b\mathcal{U}$ for all $i \in \mathbb{N}$, namely $$A^{i-1}b\mathcal{U} = \mathcal{R}(i,0) \ominus \mathcal{R}(i-1,0).$$ In Theorem \ref{Theorem_generalUniqueMatrix}, we proved that the answer to Problem \ref{problemStatement} is affirmative for generic, single-input linear systems, meaning that for cases where the linear system dynamics satisfy the generic assumptions of Lemma \ref{matrixProperties_GeneralCase}, we can uniquely determine the true dynamics from the system's reachable sets. This motivates us to devise a procedure to calculate $(A,b)$. 

We will determine $(A,b)$ from reachable sets through a two step procedure. First, we calculate $A^{i-1}b\mathcal{U}$ for $i~=~\{1,\hdots,n + 1\}$. In the case where $\mathcal{U} \neq [-c,c]$, the sequence of sets $A^{i-1}b\mathcal{U}$ can be used to calculate $(A,b)$ directly. If $\mathcal{U} = [-c,c]$, these same sets can be utilized to compute a number of candidate dynamics $(A,b)$ which satisfy $\mathcal{R}(i,0)$ for all $i$. To determine which candidate solutions are correct, we compute the forward reachable sets of \eqref{discreteLinearSystem} using all candidate $(A,b)$. By Theorem \ref{Theorem_genericUniqueMatrix_Symmetric}, in the two-dimensional case, only two solutions $(A,b)$ and $(A^\prime,b^\prime)$ such that $(A,b) = -(A^\prime,b^\prime)$ will satisfy $\mathcal{R}(i,0)$ for all $i$.

We begin our method by first using an algorithm that takes reachable sets of \eqref{discreteLinearSystem} and solves for $A^{i-1}b\mathcal{U}$. By equation \eqref{theoremABU}, we can utilize existing methods \cite{althoff2015computing, yan2015closed, feng2019minkowski} to compute the Minkowski difference between two polygons to calculate $A^{i-1}b\mathcal{U}$ given $\mathcal{R}(i,0)$ for all $i \in \mathbb{N}$. For this narrative, we adopt the method in \cite{althoff2015computing}. By Lemma 1 of \cite{althoff2015computing}, if we let $v^{(i)} \in \mathcal{V}$ be the vertices of $\mathcal{R}(i-1,0)$, then the Minkowski difference $\mathcal{R}(i,0) \ominus \mathcal{R}(i-1,0)$ may be computed by taking the intersection of the translation of the set $\mathcal{R}(i,0)$ by vertices $v^{(i)} \in \mathcal{V}$ of $\mathcal{R}(i-1,0)$: 

\begin{equation}\label{minkowskiDifference_Vertices}
    \mathcal{R}(i,0) \ominus \mathcal{R}(i-1,0) = \bigcap_{v^{(i)} \in \mathcal{V}}(\mathcal{R}(i,0) - v^{(i)}).
\end{equation}

\noindent While computing the intersection in \eqref{minkowskiDifference_Vertices} is generally computationally difficult, calculations are made significantly easier as $A^{i-1}b\mathcal{U}$ is a line segment; see \cite{althoff2015computing} for details.

We now move to recover $A^{i-1}b$ from $A^{i-1}b\mathcal{U}$. We consider two cases: $\mathcal{U} \neq [-c,c]$ and $\mathcal{U} = [-c,c]$ for some $c \in \mathbb{R}$. In the former case, taking the mean of the vertices of $A^{i-1}b\mathcal{U}$ will provide $A^{i-1}b\frac{c + d}{2}$. Multiplying this vector by $\frac{2}{c+d}$ recovers $A^{i-1}b$.

\begin{thm}\label{calculateA}
Let us assume the $n$-dimensional system \eqref{discreteLinearSystem} is controllable. Let $C_{A,b} = \begin{bmatrix}b & Ab & \hdots & A^{n-1}b\end{bmatrix}$. For the single-input case, $A = AC_{A,b}C_{A,b}^{-1}$. 
\end{thm}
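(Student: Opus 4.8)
The plan is to reduce the claimed identity to the invertibility of the controllability matrix $C_{A,b}$. For a single-input system, $C_{A,b} = \begin{bmatrix} b & Ab & \cdots & A^{n-1}b \end{bmatrix} \in \mathbb{R}^{n\times n}$ is square, and controllability is equivalent to the rank condition $\mathrm{rank}(C_{A,b}) = n$, i.e., $C_{A,b} \in GL(n)$. First I would invoke this standard characterization to guarantee that $C_{A,b}^{-1}$ exists, so that the right-hand side $A C_{A,b} C_{A,b}^{-1}$ is well-defined.

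Once invertibility is in hand, the identity itself is immediate: since $C_{A,b} C_{A,b}^{-1} = I$, we obtain $A C_{A,b} C_{A,b}^{-1} = A\,(C_{A,b} C_{A,b}^{-1}) = A I = A$. Thus the equality holds purely as an algebraic consequence of invertibility, with no further computation required.

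The substantive point, worth spelling out to connect the formula to the rest of the procedure, is that both factors on the right are recoverable from the reachable-set data. From the preceding discussion we have already extracted the vectors $A^{i-1}b$ for $i = 1, \ldots, n+1$. These directly furnish the columns of $C_{A,b}$, and moreover the shifted matrix $A C_{A,b} = \begin{bmatrix} Ab & A^2 b & \cdots & A^n b \end{bmatrix}$ is assembled from the \emph{same} list of recovered vectors (using the entries from $Ab$ through $A^n b$), without any prior knowledge of $A$. Hence $A = A C_{A,b} C_{A,b}^{-1}$ is not a mere tautology but a computable reconstruction formula for $A$ in terms of observed quantities.

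The main obstacle is essentially nonexistent: the only nontrivial ingredient is the controllability-to-invertibility step, which is classical. The proof is therefore short, and the care required is conceptual rather than technical, namely ensuring that the shifted matrix $A C_{A,b}$ is recognized as data-derived so that the identity actually yields an algorithm for $A$ rather than a trivial restatement.
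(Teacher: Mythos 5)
Your proof is correct and matches the paper's own argument, which likewise reduces the identity to the invertibility of $C_{A,b}$ for a controllable single-input system and treats the rest as immediate. Your additional remark that $AC_{A,b} = \begin{bmatrix} Ab & \cdots & A^n b\end{bmatrix}$ is itself assembled from the recovered vectors (so the formula is an algorithm, not a tautology) is exactly the point the paper makes in the surrounding discussion.
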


The proof of Theorem \ref{calculateA} is trivial, noting that $C_{A,b}$ is full rank for controllable systems. We note that the assumption of controllability is generic \cite{chen1984linear}.

In the case where $\mathcal{U} = [-c,c]$, by multiplying the vertices of $A^{i-1}b\mathcal{U}$ by $c$, we can only recover $A^{i-1}b$ up to a sign, generating two candidates for each $i$. Substituting all possible candidates for $A^{i-1}b$ into the columns of $C_{A,b}$ and $AC_{A,b}$ generates $2^{n+1}$ candidate matrices $A$.

To determine which candidate solutions yield the correct $\pm$-unique matrix pair $(A,b)$, we can plot the reachable sets of all $2^{n+1}$ candidate solutions to solve for the desired unknown $\pm$-unique system dynamics. In the next section, we use the CORA toolkit \cite{althoff2018implementation} and adopt methods of computing the Minkowski difference detailed in \cite{althoff2015computing} to numerically calculate the dynamics $(A,b)$ for an unknown band-pass filter circuit system and a two-dimensional unknown system with $\mathcal{U} = [-1,1]$, validating the developed theory.

\section{Numerical Examples}


To validate the developed theory and demonstrate how to apply the proposed method, we first consider a scenario of reverse engineering an electric circuit from manufacturer specifications. At times, manufacturers will only release partial information about a system. For example, instead of providing a dynamic model of a manufactured part, manufacturers might convey the set of all voltages a circuit may output within a set amount of time given the set of all viable input frequencies. Such information can be interpreted as the minimum time in which a state can be reached, providing a picture of the system's reachable sets. Motivated by such an example, in this section, we provide an example of identifying the matrices $(A,b)$ of a band-pass filter circuit from its reachable sets. In a subsequent example, we identify the $\pm$-unique dynamics of an unknown two-dimensional system with an input set symmetric around zero. Both examples utilize the CORA toolkit \cite{althoff2018implementation} for set computations, namely to calculate convex hulls and Minkowski differences.
\vskip -10pt

\subsection{Band-Pass Filter Circuit}

We present the linear dynamic model of a band-pass filter circuit \cite{denisenko2021synthesis}. Let us assume $x[0] = 0$. The state-space controllable canonical representation \cite{chen1984linear} of this circuit is

\vskip -10pt

\begin{equation}\label{Band-Pass Filter State Space}
\begin{gathered}
    x[j+1] = Ax[j] + bv_c[j] \\ = \begin{bmatrix}0 & 1 & 0 & 0\\ 0 & 0 & 1 & 0\\ 0 & 0 & 0 & 1\\ -a_0 & -a_1 & -a_2 & -a_3\end{bmatrix}x[j] + \begin{bmatrix}0\\0\\0\\1\end{bmatrix}v_c[j]
\end{gathered}
\end{equation}

\noindent such that $v_c[j] \in [0,1]$ for all $j \in \mathbb{Z}_{\geq 0}$. 

Assume the reachable sets $\{\mathcal{R}(j+1,0)\}_{j=0}^\infty$ of the controllable, dynamical system \eqref{Band-Pass Filter State Space} are known. From controllability, we know the form of system \eqref{Band-Pass Filter State Space}, but not parameters $a_0,\,a_1,\,a_2,\,a_3$. From this information, we want to recover the true parameters: $a_0 = 3$, $a_1 = 2$, $a_2 = 3$, and $a_3 = 6$. It can be easily shown that if $a_0 \neq 0$, the assumptions of Theorem \ref{Theorem_generalUniqueMatrix} are satisfied. Clearly, $A$ from \eqref{Band-Pass Filter State Space} satisfies said assumptions. Clearly, the matrix $M$ for which $Mb = e_1$ is a simple row permutation; the assumptions of Theorem \ref{Theorem_generalUniqueMatrix} are invariant under permutations, hence all assumptions are satisfied and the results of Theorem \ref{Theorem_generalUniqueMatrix} apply when solving for the matrix pair $(A,b)$. That is, there exists a unique matrix pair which satisfies $\{\mathcal{R}(j+1,0)\}_{j=0}^\infty$. Since the system is four-dimensional, Theorem \ref{calculateA} shows we need only consider the sets $\{\mathcal{R}(j+1,0)\}_{j=0}^4$ to calculate $(A,b)$.

Assume, that $\{\mathcal{R}(j+1,0)\}_{j=0}^4$ are known to equal \small $$\mathcal{R}(1,0) = \mathrm{conv}\left(\begin{bmatrix}0\\0\\0\\0\end{bmatrix},\begin{bmatrix}0\\0\\0\\1\end{bmatrix}\right),$$ $$\mathcal{R}(2,0) = \mathrm{conv}\left(\begin{bmatrix}0\\0\\1\\-5\end{bmatrix},\begin{bmatrix}0\\0\\1\\-6\end{bmatrix},\begin{bmatrix}0\\0\\0\\0\end{bmatrix},\begin{bmatrix}0\\0\\0\\1\end{bmatrix}\right),$$ $$\mathcal{R}(3,0) = \mathrm{conv}\left(\begin{bmatrix}0\\0.86\\-6.02\\33.00\end{bmatrix},\begin{bmatrix}0\\-0.14\\0.98\\-6.00\end{bmatrix},\begin{bmatrix}0\\-0.14\\0.98\\-5.00\end{bmatrix},\begin{bmatrix}0\\0.86\\-6.02\\34.00\end{bmatrix}\right),$$
$$\mathcal{R}(4,0) = \mathrm{conv}\left(\begin{bmatrix}-0.15\\1.05\\-5.99\\33.00\end{bmatrix},\begin{bmatrix}0.85\\-5.95\\34.01\\-188\end{bmatrix},\begin{bmatrix}0.85\\-5.95\\34.01\\-187\end{bmatrix},\begin{bmatrix}-0.15\\1.05\\-5.99\\34.00\end{bmatrix}\right),$$ 
$$\mathcal{R}(5,0) =$$ \vskip -20pt $$\mathrm{conv}\left(\begin{bmatrix}-5.93\\33.88\\-188.02\\1035.00\end{bmatrix},\begin{bmatrix}1.07\\-6.12\\33.98\\-188.00\end{bmatrix},\begin{bmatrix}1.07\\-6.12\\33.98\\-187.00\end{bmatrix},\begin{bmatrix}-5.93\\33.88\\-188.02\\1036.00\end{bmatrix}\right).$$\normalsize Based on Theorem \ref{Theorem_generalUniqueMatrix}, the knowledge of only these five sets is sufficient to reconstruct the true values of parameters $a_0$, $a_1$, $a_2$, and $a_3$.

Since $\mathcal{R}(1,0) = b\mathcal{U}$ and $\mathcal{U} = [0,1]$, $b$ can be trivially computed to equal $b = \begin{bmatrix}0 & 0 & 0 & 1\end{bmatrix}^T$. Next, by equation \eqref{theoremABU} of Theorem \ref{discreteMinkowskiDifference}, $Ab\mathcal{U} = \mathcal{R}(2,0) \ominus \mathcal{R}(1,0)$. Given $\mathcal{U} = [0,1]$, by taking the Minkowski difference we get $Ab = \begin{bmatrix}0 & 0 & 1 & -6\end{bmatrix}^T$. Repeating the procedure, we have $A^2b\mathcal{U} = \mathcal{R}(3,0) \ominus \mathcal{R}(2,0)$, $A^3b\mathcal{U} = \mathcal{R}(4,0) \ominus \mathcal{R}(3,0)$, $A^4b\mathcal{U} = \mathcal{R}(5,0) \ominus \mathcal{R}(4,0)$, and $\mathcal{U} = [0,1]$. It follows that $$A^2b = \begin{bmatrix}0\\1\\-6\\33\end{bmatrix},\,A^3b = \begin{bmatrix}1\\-6\\33\\-182\end{bmatrix},\,A^4b = \begin{bmatrix}-6\\33\\-182\\1002\end{bmatrix}.$$ Recall we assume the system is controllable, and thus the controllability matrix $C_{A,b}$ is invertible. Finally, by \mbox{Theorem \ref{calculateA}}, $$A = AC_{A,b}C_{A,b}^{-1}$$ $$= \begin{bmatrix}A^4b & A^3b & A^2b & Ab\end{bmatrix}\begin{bmatrix}A^3b & A^2b & Ab & b\end{bmatrix}^{-1}$$ which produces the correct matrix $A$ accurately reconstructing the parameters $a_0 = 3$, $a_1 = 2$, $a_2 = 3$, and $a_3 = 6$.

\subsection{Numerical Example with a Symmetric Input Set}

To validate Theorem \ref{Theorem_genericUniqueMatrix_Symmetric}, we present an example of a linear two-dimensional dynamical system

\begin{equation}\label{academicSystem Dynamics}
    \begin{bmatrix}x_1[i + 1] \\ x_2[i + 1]\end{bmatrix} = \begin{bmatrix}2 & 1\\ 2 & 3\end{bmatrix}\begin{bmatrix}x_1[i]\\x_2[i]\end{bmatrix} + \begin{bmatrix}0 \\ 1\end{bmatrix}u[i]
\end{equation}

\noindent with $\mathcal{U} = [-1,1]$. Such a system satisfies the assumptions of Lemma \ref{matrixProperties_GeneralCase}. As in the previous example, we will show that we can reconstruct the values of system matrices in \eqref{academicSystem Dynamics} from reachable sets, albeit up to a sign. Assume, thus, that we are given a sequence of reachable sets $\{\mathcal{R}(i,0)\}_{i=1}^4$ as convex hulls of vertices:

$$\mathcal{R}(1,0) = \mathrm{conv}\left(\pm\begin{bmatrix}0\\1\end{bmatrix}\right),$$
$$\mathcal{R}(2,0) = \mathrm{conv}\left(\pm\begin{bmatrix}-1\\-4\end{bmatrix},\pm\begin{bmatrix}1\\2\end{bmatrix}\right),$$
$$\mathcal{R}(3,0) = \mathrm{conv}\left(\pm\begin{bmatrix}6\\15\end{bmatrix},\pm\begin{bmatrix}4\\7\end{bmatrix},\pm\begin{bmatrix}6\\13\end{bmatrix}\right),$$
$$\mathcal{R}(4,0) = \mathrm{conv}\left(\pm\begin{bmatrix}27\\58\end{bmatrix},\pm\begin{bmatrix}15\\28\end{bmatrix},\pm\begin{bmatrix}25\\50\end{bmatrix},\pm\begin{bmatrix}27\\56\end{bmatrix}\right).$$

\noindent Clearly, $b\mathcal{U} = \mathcal{R}(1,0)$. Equation \eqref{theoremABU} of Theorem \eqref{discreteMinkowskiDifference} also shows that $Ab\mathcal{U} = \mathcal{R}(2,0) \ominus \mathcal{R}(1,0)$, $A^2b\mathcal{U} = \mathcal{R}(3,0) \ominus \mathcal{R}(2,0)$, and $A^3b\mathcal{U} = \mathcal{R}(4,0) \ominus \mathcal{R}(3,0)$. Since $\mathcal{U} = [-1,1]$, through the same calculations in the previous example we thus obtain $$b = \pm\begin{bmatrix}0\\1\end{bmatrix},\, Ab = \pm\begin{bmatrix}1\\3\end{bmatrix},$$ $$A^2b = \pm\begin{bmatrix}5\\11\end{bmatrix},\, A^3b = \pm\begin{bmatrix}21\\43\end{bmatrix}.$$

Let us denote $b^- = \begin{bmatrix}0\\-1\end{bmatrix}$, $b^+ = \begin{bmatrix}0\\1\end{bmatrix}$, $Ab^- = \begin{bmatrix}-1\\-3\end{bmatrix}$, etc. We now consider a set of $2^3$ possible candidate pairs of $(C_{A,b},AC_{A,b})$ matrices:

\begin{equation}\label{candidateControllabilityMatrices}
    (C_{A,b},AC_{A,b}) = \left\{\begin{array}{cc}
        \left( \begin{bmatrix}b^+ & Ab^+\end{bmatrix}, \begin{bmatrix}Ab^+ & A^2b^+\end{bmatrix} \right),\\
        \left( \begin{bmatrix}b^- & Ab^+\end{bmatrix}, \begin{bmatrix}Ab^+ & A^2b^+\end{bmatrix} \right),\\
        \left( \begin{bmatrix}b^+ & Ab^-\end{bmatrix}, \begin{bmatrix}Ab^- & A^2b^+\end{bmatrix} \right),\\
        \left( \begin{bmatrix}b^- & Ab^-\end{bmatrix}, \begin{bmatrix}Ab^- & A^2b^+\end{bmatrix} \right), \\
        \left( \begin{bmatrix}b^+ & Ab^+\end{bmatrix}, \begin{bmatrix}Ab^+ & A^2b^-\end{bmatrix} \right),\\
        \left( \begin{bmatrix}b^- & Ab^+\end{bmatrix}, \begin{bmatrix}Ab^+ & A^2b^-\end{bmatrix} \right),\\
        \left( \begin{bmatrix}b^+ & Ab^-\end{bmatrix}, \begin{bmatrix}Ab^- & A^2b^-\end{bmatrix} \right),\\
        \left( \begin{bmatrix}b^- & Ab^-\end{bmatrix}, \begin{bmatrix}Ab^- & A^2b^-\end{bmatrix} \right).
    \end{array}\right.
\end{equation}

\noindent By Theorem \ref{calculateA}, determining all candidate matrix pairs $(A,b)$ becomes a trivial calculation using all possible pairs from \eqref{candidateControllabilityMatrices}. Doing so provides two $\pm$-unique candidate pairs: \small$$(A,b) = \pm\left(\begin{bmatrix}2 & 1\\ 2 & 3\end{bmatrix},\begin{bmatrix}0\\1\end{bmatrix}\right),\,(A',b') = \pm\left(\begin{bmatrix}8 & -1\\ 20 & -3\end{bmatrix},\begin{bmatrix}0\\-1\end{bmatrix}\right).$$\normalsize

\begin{figure}[h]
	\centering\vspace{0.1cm}
	\subfloat[$\mathcal{R}(4,0)$ for $(A,b)$\centering]{{\includegraphics[width=4.35cm]{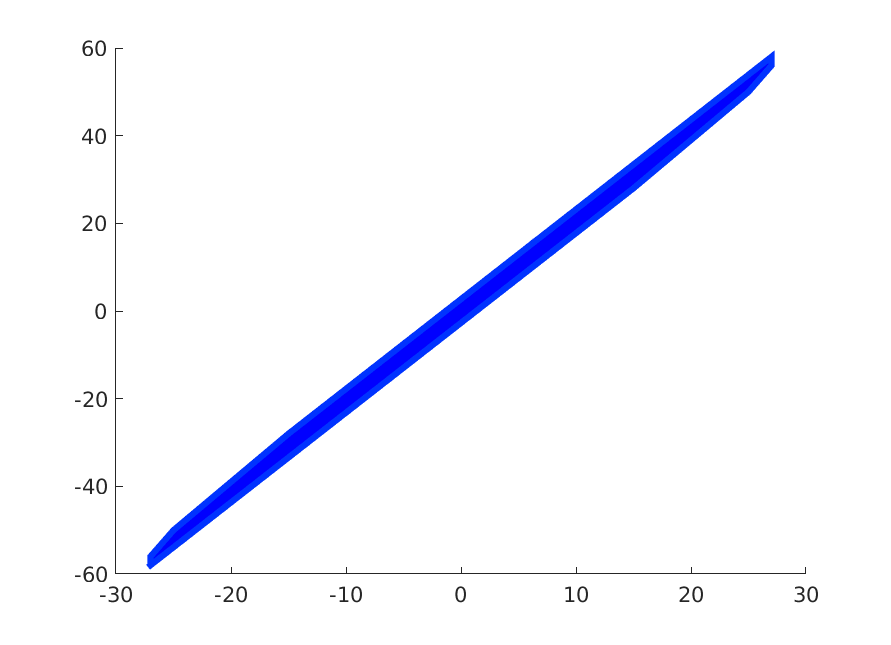} }}%
    \subfloat[$\mathcal{R}'(4,0)$ for $(A',b')$\centering]{{\includegraphics[width=4.35cm]{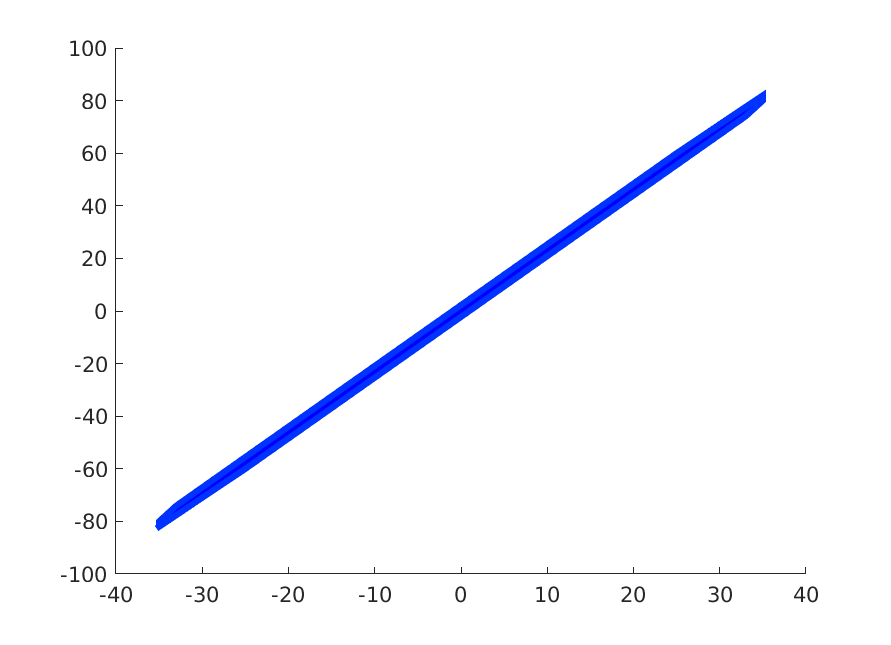} }}%
	\caption{Reachable sets for candidate dynamics at $i = 4$.}
    \label{fig:CandidateReachableSets}
\end{figure}

While the calculations above used only $\mathcal{R}(1,0)$, $\mathcal{R}(2,0)$, and $\mathcal{R}(3,0)$, to distinguish between these two final candidates we need to employ $\mathcal{R}(4,0)$. Fig. \ref{fig:CandidateReachableSets} shows the plots of the forward reachable sets for system \eqref{academicSystem Dynamics} at time $i = 4$ with matrix pairs $(A,b)$ and $(A',b')$ on the left and right respectively.

Fig. \ref{fig:CandidateReachableSets}(a) shows a reachable set that is identical to $\mathcal{R}(4,0)$, while Fig. \ref{fig:CandidateReachableSets}(b) illustrates the reachable set $$\mathcal{R}'(4,0) = \mathrm{conv}\left(\pm\begin{bmatrix}35\\82\end{bmatrix},\,\pm\begin{bmatrix}25\\60\end{bmatrix},\,\pm\begin{bmatrix}33\\74\end{bmatrix},\,\pm\begin{bmatrix}35\\80\end{bmatrix}\right),$$ which is not the same as $\mathcal{R}(4,0)$. Therefore, we can identify the matrix pair $(A,b)$, up to a sign, as the true dynamics of the unknown linear system \eqref{academicSystem Dynamics}. As mentioned before, reachable sets in this case do not allow us to distinguish any further: the dynamics that differ only in the sign generate the same reachable sets.
\section{Conclusion}

This paper considers the problem of determining the dynamics of an unknown discrete-time linear system using its reachable sets. The theory developed in this paper proves that for input sets that are asymmetric around the origin, the derived system dynamics are, given some technical assumptions, unique. Thus, in such cases, we can determine the true dynamics of an unknown system using the sequence of the system's reachable sets. For the case where the input set is symmetric, we prove that the derived dynamics are unique up to a factor of $\pm 1$ for two-dimensional systems and provide a conjecture that asserts the same result holds for the $n$-dimensional case. We then develop a method for deriving the dynamics of a system given the sequence of the system's reachable sets using Minkowski differences and proceed to illustrate by example how the method can be applied to identify the unknown linear model of a band-pass filter. Novel identification methods are also applied to an academic system with an input set symmetric around zero to detail how we can adapt said methods to uniquely identify the model of a linear system. 

A natural next step is to prove the stated conjecture to show $\pm$-uniqueness for $n$-dimensional systems. Also, our current technical assumptions are consistent with generic properties of matrices, but ideally we want to relax these assumptions to identify necessary conditions for uniqueness. We also want to consider cases when the state's initial conditions are non-zero, when there is only available knowledge of the system's reachable sets at non-consecutive time steps, and also when working with the more general framework of multi-input systems. 
\section*{Acknowledgements}

We thank Jeffrey Stuart from Pacific Lutheran University for providing insights in combinatorial matrix theory that helped develop the scope this project, namely addressing the question of uniqueness outlined in Problem \ref{problemStatement}. 

\bibliographystyle{IEEEtran}
\bibliography{root.bib}

\end{document}